\documentclass[runningheads]{llncs}
\usepackage{amsmath}
\usepackage{amssymb}
\usepackage{hyperref}
\usepackage{tikz}
\usepackage[T1]{fontenc}
\usepackage{lmodern}
\usepackage{slantsc}

\rmfamily % To load Latin Modern Roman and enable the following NFSS declarations.
\DeclareFontShape{T1}{lmr}{b}{sc}{<->ssub*cmr/bx/sc}{}
\DeclareFontShape{T1}{lmr}{bx}{sc}{<->ssub*cmr/bx/sc}{}

\usetikzlibrary{arrows,automata,shapes.geometric,positioning,
  decorations.pathmorphing,decorations.pathreplacing,
  calc,graphs}

\tikzset{
  ->, % arrow
  >=stealth', % arrow head
  auto, % label positioning
  node distance=1cm,
  minimum height=0.3cm,
  initial text={}, % start label
}
\tikzset{every state/.style={minimum size=0cm}}
\tikzset{every node/.append style={font=\footnotesize}}
\tikzset{elliptic state/.style={draw,ellipse,inner sep=2}}
\tikzset{SharpSquiggly/.style={decorate,
  decoration={zigzag, segment length=4, amplitude=0.9}}}

\allowdisplaybreaks

\title{%
  Construction of a DFA for Computing Grundy Numbers
  in the Successful Derivation Games on Right-Linear Grammars%
}
\titlerunning{%
  On the Successful Derivation Games on Right-Linear Grammars%
}
\author{Yoshiaki Takata\inst{1} \and Yusuke Inoue\inst{2} \and
  Hiroyuki Seki\inst{3}}
\institute{
  School of Informatics, Kochi University of Technology \\
  \email{takata.yoshiaki@kochi-tech.ac.jp}
  \and
  Graduate School of Informatics, Nagoya University  \\
  \and
  School of Social Informatics, Mukogawa Women's University \\
}

\newcommand{\Nat}{\mathbb{N}}
\newcommand{\powset}[1]{\mathcal{P}({#1})}
\newcommand{\finpowset}[1]{\mathcal{P}_{\mathrm{fin}}({#1})}

\newcommand{\calG}{\mathcal{G}}
\newcommand{\Grd}[1]{\calG_{#1}}
\newcommand{\mex}{\mathop{\mathrm{mex}}}

\newcommand{\LQ}{\backslash}
\newcommand{\RQ}{\slash}

\newcommand{\posTo}{\rightsquigarrow}
\newcommand{\Deriv}{\Rightarrow}

\newcommand{\Sign}[1]{\mathop{\mathrm{sign}}{#1}}

\newcommand{\Rev}[1]{{#1}^{\mathcal{R}}}
\newcommand{\E}{\mathsf{E}}
\newcommand{\Rn}{R_{\mathsf{N}}}
\newcommand{\Ru}{R_{\mathsf{U}}}
\newcommand{\Rt}{R_{\mathsf{T}}}
\newcommand{\GammaN}{\Gamma_{\mathsf{N}}}
\newcommand{\ALL}{\mathsf{ALL}}

\renewcommand{\t}[1]{\mathtt{#1}}
\renewcommand{\setminus}{-}

\begin{document}
\maketitle

\begin{abstract}
Inoue et al.\ have introduced
the \emph{successful derivation game} (SDG) on context-free grammars (CFGs),
which is a generalization of classic heap-based games
including subtraction games and Keyles,
and shown that
the least upper bound of the Grundy numbers in
the SDG on a given CFG $G$
is undecidable in general even when we restrict $G$ to be a linear CFG\@.
This paper shows that
for the SDG on a right-linear grammar (RLG),
we can construct a DFA for computing the Grundy number of
a given position.
In other words,
for the SDG on an RLG,
the set of positions with a given Grundy number~$c$
is regular.
As a corollary, the least upper bound of the Grundy numbers
in the SDG on a given RLG is decidable.
We also investigate the complexity of
computing the least upper bound of the Grundy numbers
in the SDG on a given RLG,
and it is shown to be PSPACE-complete.
\end{abstract}

%%% ----- Introduction -----

\section{Introduction}

\emph{Combinatorial games} are
``two-player games with no hidden information and
no chance elements''\cite{S13}
and the theory of combinatorial games concerns with
the mathematical structure of those games.
Among them,
\emph{impartial}, \emph{loop-free}
games under the \emph{normal-play} convention are
particularly well-studied,
where impartial means that both players have the same legal moves,
loop-free means that every play is finite,
and normal play means that the player who makes the last move wins.
Examples of such games include classical heap-based games
played on heaps of stones,
such as \emph{subtraction games} including \emph{Nim}.
One of the important aspects in
the study of combinatorial games is
the notion of \emph{Grundy numbers}~\cite{G39,S35}.
A Grundy number is a nonnegative integer assigned to
each position and indicates the position is winning or losing.

Inoue et al.~\cite{IKIK26} have introduced
the \emph{successful derivation game} (SDG) on context-free grammars (CFGs),
which can be seen as a generalization of a subclass of heap-based games
including subtraction games and Keyles.
%The SDG starts with a given target word,
%and players alternately rewrite a word according to a given CFG\@.
They have shown that
the least upper bound of the Grundy numbers in
the SDG on a given CFG $G$
is undecidable in general even when we restrict $G$ to be a linear CFG\@.
%Moreover, it is unknown whether or not
%the set of positions with Grundy number zero
%(called $\mathcal{P}$-positions)
%%(resp.\ the set of positions with a nonzero Grundy number
%% called $\mathcal{N}$-positions)
%is context-free in general.
%
This paper shows that for the SDG on a right-linear grammar (RLG),
we can construct a DFA for computing the Grundy number of
a given position.
In other words,
for the SDG on an RLG,
the set of positions with a given Grundy number~$c$
is regular.
By utilizing the DFA for computing the Grundy number,
we show that the least upper bound of the Grundy numbers
in the SDG on a given RLG is decidable.
We also investigate the complexity of
computing the least upper bound of the Grundy numbers
in the SDG on a given RLG,
and it is shown to be PSPACE-complete.

The outline of this paper is as follows.
In Section~\ref{sec:definitions},
we define the successful derivation game and
related concepts.
In Section~\ref{sec:RLGtoDFA},
we give an algorithm to construct
a DFA representing the set of positions with
a given Grundy number.
The correctness of the algorithm is shown
in Section~\ref{sec:correctness}.
In Section~\ref{sec:complexity},
we show the PSPACE-completeness of computing
the least upper bound of the Grundy numbers
in the SDG on a given RLG\@.
In Section~\ref{sec:examples},
we show two examples of SDGs on RLGs
and examine the resultant DFAs.
%The first one represents
%the subtraction game with colored stones,
%which has been introduced in~\cite{IKIK26}.
%The second one
%represents a game similar to the first one
%and is an example with a unit rule.
Finally, we conclude this paper in Section~\ref{sec:conclusion}.

%%% ----- Definitions -----

\section{Definitions}
\label{sec:definitions}

Let $\Nat=\{0,1,2,\ldots\}$ be the set of natural numbers
including zero.
%Let $\Bool=\{0,1\}$.
For a set~$S$,
let $\powset{S}$ be the power set of~$S$ and
$\finpowset{S}$ be the set of all finite subsets of~$S$.
For a finite set $S$, let $|S|$ denote the cardinality of~$S$.
For a finite word $w$, let $|w|$ denote the length of~$w$.

%A set of words is called a \emph{language}.
For a word $x$, let $\Rev{x}$ denote the \emph{reverse} of~$x$;
i.e., if $x=a_1a_2\ldots a_k$, then $\Rev{x}=a_k\ldots a_2a_1$.
For a language $L$, let $\Rev{L} = \{\Rev{x}\mid x\in L\}$.
For a word $x$ and a suffix $y$ of~$x$,
let $x\RQ y$ denote the right quotient of $x$ with respect to~$y$;
i.e., $x\RQ y$ equals the word $z$ such that $x = zy$.
%For a language $L$ and a word $y$,
%$L\RQ y$ denotes the right quotient of $L$ with respect to $y$;
%i.e.,
%$L\RQ y = \{z\mid zy\in L\}$.
%
For a word $y$ and a prefix $x$ of~$y$,
$x\LQ y$ denotes the left quotient of $y$ with respect to~$x$;
i.e., $x\LQ y$ equals the word $z$ such that $y = xz$.
%For a language $L$ and a word $x$,
%%$xL$ denotes the concatenation of $x$ and $L$ and
%$x\LQ L$ denotes the left quotient of $L$ with respect to $x$;
%i.e.,
%%$xL = \{xy\mid y\in L\}$ and
%$x\LQ L = \{y\mid xy\in L\}$.
%$\SufNe(L)$ denotes the set of all non-empty suffixes of every word in~$L$.

For positions $Q, Q'$ of a combinatorial game,
we write $Q\posTo Q'$
if there exists a legal move from $Q$ to~$Q'$.
The \emph{game graph} of a game is the directed graph where
the nodes are the positions and each edge represents
a legal move.
%The \emph{game tree} of a game rooted at a position $Q$
%is the labeled rooted tree where each node is labeled with a position,
%in particular the root is labeled with~$Q$, and
%for every node $v$, the labels of $v$'s children are the successor positions
%of $v$'s label.

%%%

\begin{definition}[context-free grammar]
  A \emph{context-free grammar} (or \emph{CFG}) is
  a 4-tuple $G=(\Gamma,\Sigma,R,I)$ where
  \begin{itemize}
  \item $\Gamma$ is a finite set of nonterminal symbols,
  \item $\Sigma$ is a finite set of terminal symbols,
  \item $I\in\Gamma$ is the initial symbol, and
  \item $R\subseteq \Gamma\times(\Gamma\cup\Sigma)^+$
    is a finite set of rules.
  \end{itemize}
\end{definition}

A rule $A\to\beta$ is called a \emph{terminal rule} if $\beta\in\Sigma^+$.
A rule $A\to B$ is called a \emph{unit rule} if $B\in\Gamma$.
In this paper, we consider only CFGs
\emph{without a cycle of unit rules};
i.e.,
if CFG $G$ has a unit rule ${A\to B} \in R$,
then there is no derivation of the form $B \Deriv^* A$.
This condition is imposed to ensure that
derivations are of finite length.
A derivation $\alpha\Deriv^*\beta$ with
$\alpha,\beta\in(\Gamma\cup\Sigma)^+$ is defined in the usual way.
%A derivation $\alpha\Deriv^*\beta$ is \emph{successful}
%if $\beta\in\Sigma^+$ and is \emph{incomplete} otherwise.

A CFG $G$ is \emph{linear}
if the right-hand side of each rule of $G$ contains
at most one nonterminal.
A linear CFG $G$ is \emph{right-linear}
if the right-hand side of each rule is an element of
$\Sigma^+\cup\Sigma^*\Gamma$.
% every nonterminal appears only at the end of a right-hand side.
%A \emph{regular} grammar is a right-linear grammar
%where the right-hand side of each rule is an element of
%$\Sigma\cup\Gamma\cup\Sigma\Gamma$.

For each nonterminal $A\in\Gamma$,
let $L(G,A)$ be the language defined as
\begin{equation*}
  L(G,A)=\{w\in\Sigma^+ \mid
    \text{there is a derivation of the form } A\Deriv^* w\}.
\end{equation*}
The language generated by $G = (\Gamma,\Sigma,R,I)$ is defined
as $L(G)=L(G,I)$.

\begin{definition}[successful derivation game\,{\mdseries\cite{IKIK26}}]
  For a CFG $G=(\Gamma,\Sigma,R,I)$ and a word $w\in L(G)$,
  the \emph{successful derivation game} $(G,w)$ is
  the two-player combinatorial game defined as follows.
  \begin{itemize}
  \item A \emph{position} is a finite multiset of the form
    \begin{equation*}
      \{ (A_1,w_1),(A_2,w_2),\ldots,(A_{\ell},w_{\ell}) \}
    \end{equation*}
    where $(A_i,w_i)\in \Gamma\times\Sigma^+$ and
    $w_i\in L(G,A_i)$ for each $1\le i\le\ell$.
  \item A legal move from a position $Q$ is defined as follows.
    First, the current player chooses an element $(A,w)$ of $Q$.
    Next, the player chooses a sequence
    $(A_1,w_1),\ldots,(A_{\ell},w_{\ell})\in \Gamma\times\Sigma^+$
    with $\ell\ge 0$ such that
    \begin{itemize}
    \item $w_i\in L(G,A_i)$ for each $i$, and
    \item $A\to u_0 A_1 u_1 \ldots u_{\ell-1} A_{\ell} u_{\ell} \in R$
      where $w=u_0 w_1 u_1 \ldots u_{\ell-1} w_{\ell} u_{\ell}$.
    \end{itemize}
    Finally, the position $Q$ is updated to
    \begin{equation*}
      Q' = (Q\setminus\{(A,w)\})\cup\{ (A_1,w_1),\ldots,(A_{\ell},w_{\ell}) \}.
    \end{equation*}
  \item The game starts from the initial position $Q_0=\{(I,w)\}$,
    and the players alternate making legal moves $Q_i\posTo Q_{i+1}$
    for each $i\ge 0$.
    The player who is unable to move loses.
  \end{itemize}
\end{definition}

%Note that a successful derivation game is a \emph{impartial}, \emph{loop-free}
%game under the normal-play convention,
%where impartial means that both players have the same legal moves,
%loop-free means that every play is finite,
%and normal play means that the player who makes the last move wins.

%%%

%\bigskip

Let $\mex:\finpowset{\Nat}\to\Nat$ be the function
defined as $\mex(X)=\min(\Nat\setminus X)$,
which means ``the minimum excluded value.''

\begin{definition}[Grundy number]
  \label{def:Grundynum}
  For a position $Q$ of an impartial, loop-free combinatorial game~$G$,
  the \emph{Grundy number} $\Grd{G}(Q)\in\Nat$ is defined as
  \begin{equation*}
    \Grd{G}(Q)=\mex(\{ \Grd{G}(Q') \mid Q\posTo Q' \}).
  \end{equation*}
\end{definition}

Note that if $Q$ has no successive position, then
$\Grd{G}(Q)=0$ by definition.

It is well known that if $G$ is a normal-play game, then
$\Grd{G}(Q)\ne 0$
iff the next player has a winning strategy at~$Q$
(Sprague-Grundy theorem)~\cite{G39,S35}.

Since we assume that a given CFG $G$ has no cycle of unit rules,
the SDG on $G$ is loop-free
and we can define the Grundy numbers of positions in the game
as the same as Definition~\ref{def:Grundynum}.

%%% ----- Example: CSub -----

\begin{figure}[t] \centering
  \begin{tikzpicture}[->, thick, scale=0.9, transform shape]
  \tikzstyle{every node}=[font=\small]
  \foreach \w [count=\i from 0] in {aa, ba, ab, bb}
  {
    \node (Aa\w) at (0, -1.4 * \i) {$(A,\t{a\w})$};
    \node [below=0.7 of Aa\w.center, anchor=center] (Ab\w) {$(A,\t{b\w})$};
  }
  \foreach \w [count=\i from 0] in {ba,aa,ab,bb}
  {
    \node[below right=0.35 and 2.8 of Aa\w.center, anchor=center] (A\w) {$(A,\t{\w})$};
    \draw (Aa\w) -- (A\w);
    \draw (Ab\w) -- (A\w);
  }
  \foreach \w [count=\i from 0] in {a,b}
  {
    \node[below right=0.7 and 2.8 of Aa\w.center, anchor=center] (A\w) {$(A,\t{\w})$};
    \draw (Aa\w) -- (A\w);
    \draw (Ab\w) -- (A\w);
  }
  \node[below right=1.4 and 2.8 of Aa.center, anchor=center] (F) {$\varnothing$};
  \draw (Aa) to [bend left=10] (F);
  \draw (Ab) to [bend right=10] (F);
  \draw (Aaaa) to [bend left=23] (Aa);
  \draw (Aaab) to [bend left=23] (Ab);
  \draw (Aaa) to [bend right=18] (F);
  \tikzstyle{every node}=[font=\footnotesize, text=red]
  \foreach \w / \g in {aaa/0, baa/0, aba/1, bba/1,
                       aab/2, bab/1, abb/1, bbb/1}
  {
    \node[below=0.2 of A\w.north west, anchor=center] {$\g$};
  }
  \foreach \w / \g in {aa/2, ba/0, ab/0, bb/0, a/1, b/1}
  {
    \node[below left=0.1 and 0.2 of A\w.south, anchor=center] {$\g$};
  }
  \node[above=0.1 of F.north, anchor=center] {$0$};
  \end{tikzpicture}
  \caption{A partial game graph of the successful derivation game
    in Example~\ref{example:pre-csub}.}
  \label{fig:graph-csub}
\end{figure}%

\begin{example}[Subtraction game with colored stones]
\label{example:pre-csub}
  Consider an RLG
  $G=(\{A\},\allowbreak \Sigma,\allowbreak R,A)$ where $\Sigma=\{\t{a},\t{b}\}$ and
  $R=\{\,
  A \to \t{a}A \mid \t{aa}A \mid \t{b}A \mid \t{a}
  \mid \t{aa} \mid \t{b} \,\}$,
  which equals the grammar $\mathbf{CSub}$
  described in Examples~4 and 7 in~\cite{IKIK26}.
  The SDG on $G$ represents a game in which
  each player can remove one $\t{a}$, two $\t{a}$'s, or one $\t{b}$
  from left to right.

  Figure~\ref{fig:graph-csub} shows a partial game graph
  of the SDG on~$G$.
  In the graph, each singleton position $\{(A,w)\}$
  is written as $(A,w)$ for simplicity.
  The graph contains every position $(A,w)$ such that $|w|\le 3$.
  The graph also shows the Grundy number of each position.
  For example, $\Grd{G}(\{(A,\t{a})\})=1$
  since $\{(A,\t{a})\}$ has only one successor position $\emptyset$
  whose Grundy number is zero,
  and $\Grd{G}(\{(A,\t{aa})\})=2$
  since $\{(A,\t{aa})\}$ has two successor positions
  $\{(A,\t{a})\}$ and $\emptyset$
  whose Grundy numbers are one and zero, respectively.
\end{example}

%%%

%\bigskip

In the following, we consider Grundy numbers of positions of
a successful derivation game
on a CFG $G=(\Gamma,\Sigma,R,I)$.
For a position $Q$ of a successful derivation game $(G,w)$
on $G$,
let $\Grd{G}(Q)$ denote the Grundy number of $Q$,
because $w$ is irrelevant to the value of $Q$'s Grundy number.

For each nonterminal $A\in\Gamma$,
define the function $\Grd{G,A}:\Sigma^+\to \Nat\cup\{\bot\}$ as
\begin{equation*}
  \Grd{G,A}(w)=\begin{cases}
    \Grd{G}(\{(A,w)\}) & \text{if } w\in L(G,A) \\
    \bot & \text{otherwise}
  \end{cases}
\end{equation*}
for each $w\in\Sigma^+$.

Inoue et al.\ have shown the following propositions~\cite{IKIK26}.

\begin{proposition}[computability of $\Grd{G,A}$]
  \label{prop:computable}
  For a CFG $G=(\Gamma,\Sigma,R,I)$,
  a nonterminal $A\in\Gamma$, and a word $w\in\Sigma^+$,
  $\Grd{G,A}(w)$ can be computed in polynomial time
  with respect to the length of $w$.
\end{proposition}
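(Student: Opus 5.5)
We propose to compute $\Grd{G,A}(w)$ by dynamic programming over the substrings of $w$, in the spirit of the CYK algorithm. The computation rests on the Sprague--Grundy sum property. A position $Q=\{(A_1,w_1),\ldots,(A_\ell,w_\ell)\}$ is the disjunctive sum of the singleton games $\{(A_i,w_i)\}$, because a move always acts on exactly one element of the multiset. Hence
\begin{equation*}
  \Grd{G}(Q)=\Grd{G,A_1}(w_1)\oplus\cdots\oplus\Grd{G,A_\ell}(w_\ell),
\end{equation*}
where $\oplus$ is bitwise XOR. The empty multiset has value $0$.

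Combining this with the definition of a legal move gives the recurrence
\begin{equation*}
  \Grd{G,A}(x)=\mex\bigl\{\Grd{G,A_1}(x_1)\oplus\cdots\oplus\Grd{G,A_\ell}(x_\ell)\bigr\},
\end{equation*}
valid for $x\in L(G,A)$. The set ranges over all rules $A\to u_0A_1u_1\cdots A_\ell u_\ell\in R$ and all factorizations $x=u_0x_1u_1\cdots x_\ell u_\ell$ with every $x_i\in L(G,A_i)$.

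We fill a table indexed by $(A,i,j)$, where $w[i..j]$ ranges over the $O(|w|^2)$ substrings of $w$. Each entry holds $\Grd{G,A}(w[i..j])\in\Nat\cup\{\bot\}$. Entries are processed in order of increasing substring length.

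Within one length, unit rules $A\to B$ refer to the same substring. We therefore process nonterminals in a topological order of the unit-rule graph, which is acyclic by the standing assumption that $G$ has no cycle of unit rules.

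Membership $x\in L(G,A)$ is recorded as the entry being different from $\bot$. This is the ordinary CYK-style membership check, computed simultaneously: an entry is not $\bot$ iff at least one factorization with all subentries defined exists.

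The main obstacle is polynomiality. A rule with $\ell$ nonterminals admits $O(|w|^{\ell})$ factorizations, and $\ell$ is bounded by the grammar, so the count is polynomial only for a fixed grammar. More importantly, we must keep the value sets small. Every Grundy value is at most the number of options, but a cruder and safe bound also suffices. Each move strictly decreases the total length of the words, or else applies a unit rule, whose chains are bounded by $|\Gamma|$. Hence game heights are $O(|w|\cdot|\Gamma|)$, and since a Grundy number never exceeds the height of the position, every entry is at most polynomial in $|w|$.

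To keep each rule evaluation polynomial, we intend to replace the enumeration of factorizations by an inner dynamic program over the rule's right-hand side. For a prefix $u_0A_1\cdots A_k$ of the rule and a prefix of $x$, we store the set of XOR values achievable so far. All XOR values are below $2^{\lceil\log_2(\text{height}+1)\rceil}$, which is polynomial in $|w|$, so these sets have polynomial size. This yields a polynomial-time algorithm in $|w|$, with the grammar treated as fixed. Correctness then follows by induction on substring length, using the sum property and the recurrence above.
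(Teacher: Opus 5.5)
The paper does not prove this proposition; it imports it from Inoue et al.~\cite{IKIK26}, so there is no in-paper argument to compare your proof with. Judged on its own, your CYK-style dynamic program is sound, for three reasons.
\begin{itemize}
\item The Sprague--Grundy sum property applies, because every move rewrites exactly one element of the multiset.
\item Only unit rules keep the substring unchanged. Every other rule splits $x$ into strictly shorter nonempty pieces, since $L(G,B)\subseteq\Sigma^+$. Processing by substring length and then by a topological order of the acyclic unit-rule graph is therefore well-founded.
\item For a fixed grammar, plain enumeration of the polynomially many factorizations per rule is already polynomial. Every Grundy value is bounded by the number of options, which is at most $|R|\cdot(|x|+1)^{\ell_{\max}}$. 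The inner XOR-set DP is not needed for the statement as posed.
\end{itemize}

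One justification is false, however. For a general CFG it is not true that every non-unit move strictly decreases the total length of the words. A rule $A\to BC$ replaces $(A,w)$ by $(B,w_1),(C,w_2)$ with $|w_1|+|w_2|=|w|$. Your claim only holds for linear grammars.

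The height bound $O(|w|\cdot|\Gamma|)$ you derive from it is nevertheless correct. Use the potential $\sum_{(B,y)\in Q}(2|y|-1)$. A move by $A\to u_0A_1u_1\cdots A_\ell u_\ell$ decreases it by $2\sum_i|u_i|+\ell-1$, which is at least $1$ unless the rule is a unit rule. Combine this with the fact that a single element can undergo at most $|\Gamma|-1$ consecutive unit moves. Equivalently, count the internal nodes of a derivation tree of $w$ without unit cycles.

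Since the proposition treats $G$ as fixed, you could simply drop the height argument and rely on the options bound. With the corrected height bound, though, your inner XOR-set DP yields a running time polynomial in the size of $G$ as well, which is more than the proposition asks for.
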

\begin{proposition}[boundedness of $\Grd{G,A}$]
  \label{prop:mA}
  For a linear CFG $G=(\Gamma,\Sigma,R,I)$ and
  a nonterminal $A\in\Gamma$,
  there is an upper bound
  $m_A\in\Nat$ such that $\Grd{G,A}(w)\le m_A$
  for each $w\in L(G,A)$.
\end{proposition}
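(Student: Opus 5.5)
The plan is to bound the Grundy number of $\{(A,w)\}$ by the \emph{number of legal moves} out of it, rather than by the values of its successors. The number of moves will turn out to be at most the number of rules with left-hand side $A$, which does not depend on $w$.

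First, I will analyse the moves from a singleton position $\{(A,w)\}$ when $G$ is linear. A move chooses the unique element $(A,w)$ and a rule $A\to\beta\in R$. Since $G$ is linear, $\beta$ has one of two forms.
\begin{itemize}
\item If $\beta\in\Sigma^+$ is a terminal rule, the move is legal only when $w=\beta$, and then the successor is $\emptyset$.
\item Otherwise $\beta=uBv$ with $u,v\in\Sigma^*$ and $B\in\Gamma$. The move must satisfy $w=uw'v$ with $w'\in L(G,B)$. Since $u$ and $v$ are fixed by the rule, $w'=u\LQ(w\RQ v)$ is uniquely determined if it exists, and the successor is $\{(B,w')\}$.
\end{itemize}
In both cases each rule yields at most one successor. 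So the set $\{Q'\mid \{(A,w)\}\posTo Q'\}$ has cardinality at most $k_A:=|\{A\to\beta\in R\}|$.

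Second, I will use the elementary fact that $\mex(X)\le|X|$ for every finite $X\subseteq\Nat$. This holds because the values $0,\ldots,|X|$ are $|X|+1$ numbers, so at least one of them is missing from $X$. The set of Grundy values of the successors of $\{(A,w)\}$ has at most $k_A$ elements. Definition~\ref{def:Grundynum} then gives $\Grd{G,A}(w)\le k_A$ for every $w\in L(G,A)$, so I can take $m_A=k_A$.

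I do not expect a real obstacle. The only point that needs care is the uniqueness of $w'$ for a fixed rule, which is exactly where linearity enters. For a non-linear rule $A\to u_0A_1u_1A_2u_2$, the split of $w$ into $w_1$ and $w_2$ is not determined, so the number of successors can grow with $|w|$. I will also note that loop-freeness, guaranteed by the absence of unit-rule cycles, is what makes the Grundy values well defined in the first place.
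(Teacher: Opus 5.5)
Your proof is correct and follows essentially the same route the paper relies on: the proposition itself is cited from \cite{IKIK26}, and Section~\ref{sec:m_A} justifies its concrete $m_A$ by observing that $\Grd{G,A}(w)$ never exceeds the number of legal moves from $\{(A,w)\}$, which is bounded by a count of rules. The only difference is that for right-linear grammars the paper counts more finely (only rules whose right-hand side starts with the first letter of $w$, at most one terminal move, plus the unit rules), whereas your $k_A$ counts all rules for $A$; this is coarser but works for every linear CFG.
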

\begin{proposition}[undecidability of $\max\Grd{G,A}$]
  \label{prop:undecidable}
  The problem to compute
  the least upper bound of Grundy numbers in the SDG,
  defined below, is undecidable even if the input CFG $G$ is
  restricted to be linear.
  \begin{itemize}
  \item[] \hspace{-1em}%
    \textsc{\textsl{Max-Grundy-Number-in-SDG}}
  \item Instance: a CFG $G=(\Gamma,\Sigma,R,I)$,
    a nonterminal $A\in\Gamma$, and $k\in\Nat$.
  \item Question: $\max_{w\in L(G,A)}\bigl(\Grd{G,A}(w)\bigr) = k$ ?
  \end{itemize}
\end{proposition}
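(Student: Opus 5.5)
The natural approach is a many-one reduction from a known undecidable problem about linear CFGs. The candidates are the Post Correspondence Problem and the emptiness of $L(H_1)\cap L(H_2)$ for two linear CFGs $H_1,H_2$; the latter is itself undecidable via the standard PCP encoding $x_{i_1}\cdots x_{i_n}\,\#\,\Rev{(y_{i_1}\cdots y_{i_n})}$. Since the least upper bound of the Grundy numbers is finite by Proposition~\ref{prop:mA}, it suffices to give a computable map from an instance $(H_1,H_2)$ to a linear CFG $G$, a nonterminal $A$ and a number $k$ such that $\max_{w\in L(G,A)}\Grd{G,A}(w)=k$ holds iff $L(H_1)\cap L(H_2)\neq\emptyset$. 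Computing the maximum is then at least as hard as the intersection problem, which gives undecidability.

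\textbf{Gadgets.} The construction of $G$ proceeds in two stages.

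First I build, for each $H_j$, a linear ``membership gadget'' $M_j$ whose singleton positions have a Grundy value that depends only on membership. Concretely, $\Grd{G,M_j}(w)$ should take one fixed value when $w\in L(H_j)$ and another fixed value otherwise. To achieve this, I let $M_j$ generate all of $\Sigma^+$ through a filler grammar with a controlled, known Grundy profile (for example, a single-letter deletion chain whose values alternate predictably by length). I then add, for the words of $L(H_j)$, extra rules that give the player one additional option whose resulting Grundy value is fixed. Because $G$ is linear, every move from a singleton position leads to a position with at most one component. Hence Grundy values propagate along chains, and the mex computations stay one-dimensional.

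Second, the top nonterminal $A$ has rules such as $A\to \t{c}_1 M_1$ and $A\to \t{c}_2 M_2$ together with a few padding rules, all over a shared word. These are arranged so that a word $w$ yields a position with both ``membership'' options available exactly when $w\in L(H_1)\cap L(H_2)$. By the mex rule, that position then reaches the value $k$, while every other word in $L(G,A)$ has value at most $k-1$. The correctness check is then a case analysis of the successor sets of $(A,w)$, carried out under the two possibilities for membership of $w$ in each language.

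\textbf{Main obstacle.} The hard part is the first stage, making the Grundy value of the gadget insensitive to the internal structure of the linear derivations of $H_j$. A derivation in $H_j$ passes through many intermediate singleton positions, and their Grundy values could be arbitrary, which would destroy the clean dichotomy. My first attempt is to normalize $H_j$ so that every intermediate step has exactly one further non-terminal continuation plus a uniform ``escape'' rule to a position of known value. For instance, each nonterminal could be given additional rules that erase the remaining word letter by letter via a fixed chain. Inductively, every intermediate position then has a value in a fixed two-element set determined solely by whether the remaining word is derivable. If this normalization proves too weak, the fallback is to reduce instead from the halting problem for two-counter machines, encoding computation histories as linear-CFG words. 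In that setting the intermediate positions are more rigidly constrained, and the Grundy values can be tracked directly.
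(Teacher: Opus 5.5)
\textbf{There is a genuine gap.} Your overall plan matches the mechanism behind this result. You reduce from an undecidable problem on linear CFGs, use gadgets that turn membership into a fixed Grundy value, and combine them at a top nonterminal by a mex case analysis. The paper does not reprove the proposition (it is imported from Inoue et al.), but the same mechanism appears in its proof of Theorem~\ref{lemma:PSPACE-comp}. The gap is the step you flag yourself as the main obstacle, and neither remedy you offer closes it.

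In your filler gadget, the extra option for $w\in L(H_j)$ leads to $(S_j,w)$, where $S_j$ is the start symbol of $H_j$. Its value is the uncontrolled $\Grd{G,S_j}(w)$. Your proposed normalization makes things worse, for two reasons.
\begin{itemize}
\item Adding letter-by-letter escape rules to every nonterminal $X$ of $H_j$ enlarges $L(G,X)$ to essentially all words. Positions then no longer encode derivability in $H_j$ at all.
\item Even if they did, a position can have several rule-successors. The mex over their values together with the escape value can take three or more distinct values, so the invariant ``every intermediate position has a value in a fixed two-element set'' has no reason to hold.
\end{itemize}
The two-counter-machine fallback is not an argument.

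\textbf{The missing idea.} You never need to control the internal Grundy values, because acyclic unit rules normalize an arbitrary value. Give $H_1,H_2$ disjoint nonterminal sets and add $C_j\to D_j\mid S_j$, $D_j\to S_j$, and $E_j\to C_j$.
\begin{itemize}
\item For $w\in L(H_j)$, let $g=\Grd{G,S_j}(w)$. Then $\Grd{G,D_j}(w)=\mex(\{g\})$, and one of $g$, $\mex(\{g\})$ is $0$. Hence $\Grd{G,C_j}(w)\neq 0$, and so $\Grd{G,E_j}(w)=0$.
\item For $w\notin L(H_j)$, all three values are $\bot$.
\end{itemize}
This holds regardless of how $H_j$ derives $w$. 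It is exactly the $I_7,I_8,I_9$ part of the grammar in the proof of Theorem~\ref{lemma:PSPACE-comp} (Lemma~2 of Inoue et al.).

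With this gadget your intersection reduction is immediate. Add $F_2\to E_2$ and $A\to E_1\mid F_2$. Then $\Grd{G,A}(w)$ is
\begin{itemize}
\item $2$ on $L(H_1)\cap L(H_2)$,
\item $1$ on $L(H_1)\setminus L(H_2)$,
\item $0$ on $L(H_2)\setminus L(H_1)$,
\item $\bot$ elsewhere.
\end{itemize}
So the maximum equals $2$ iff the intersection is nonempty, and $G$ stays linear because only unit rules were added. Alternatively, the paper's non-universality reduction carries over verbatim, since universality is undecidable for linear CFGs.

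Note also that your top rules $A\to\t{c}_1M_1$ and $A\to\t{c}_2M_2$ cannot both apply to the same word unless $\t{c}_1=\t{c}_2$. Unit rules are what let both options act on a shared word.
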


%%%

In this paper, we show that
the problem \textsc{Max-Grundy-Number-in-SDG}
described in Proposition~\ref{prop:undecidable}
becomes decidable if we restrict the input CFG $G$ to be
right-linear.
More generally, we show that we can construct a DFA
representing the set of positions with a given Grundy number~$c$,
and the above problem can be reduced to
the emptiness problem for these DFAs.

%%% ----- Construction of a DFA -----

\section{DFA Representing the Set of Positions
  With a Given Grundy Number}
\label{sec:RLGtoDFA}

From here, we concentrate on
the SDG on a given right-linear grammar (RLG)~$G$.
In this section, we describe an algorithm for constructing
a DFA $M_{A,i}$ for each pair
of a nonterminal $A$ of $G$ and a number $i\le m_A$
($m_A$ is an upper bound of $\Grd{G,A}(w)$ described in
Proposition~\ref{prop:mA})
that satisfies
\begin{equation*}
  \Rev{w}\in L(M_{A,i}) \iff \Grd{G,A}(w)=i.
\end{equation*}
We can compute
$\max_{w\in L(G,A)}\Grd{G,A}(w)$
by checking the emptiness of $L(M_{A,i})$ for
each $i\le m_A$,
because
\begin{equation*}
  \max_{w\in L(G,A)}\Grd{G,A}(w)=m \iff
  L(M_{A,m})\ne\emptyset \text{ and }
  \forall i > m.\
  L(M_{A,i})=\emptyset.
\end{equation*}
Every $M_{A,i}$ shares the same
state set $Q$, start state $q_0$,
and transition function~$\delta$,
differing only in the set of final states.

Before showing the algorithm for constructing $M_{A,i}$,
we describe a concrete method to define an upper bound
$m_A$ of~$\Grd{G,A}(w)$.

%%%

\subsection{Defining an Upper Bound $m_A$}
\label{sec:m_A}
Let a given RLG be $G=(\Gamma,\Sigma,R,I)$.
For each $A\in\Gamma$ and $a\in\Sigma$,
define the following sets:
\begin{align*}
  T_{A,a} &= \{ x\in\Sigma^* \mid A\to ax\in R\},\\
  N_{A,a} &= \{ xB\in \Sigma^*\Gamma \mid A\to axB\in R\},\\
  U_{A} &= \{ C\in\Gamma \mid A\to C\in R \}.
\end{align*}
Then, define an upper bound $m_A$ of $\Grd{G,A}(w)$ as
\begin{equation*}
  m_A =
      \max_{a\in\Sigma}\bigl(|N_{A,a}| + \Sign{|T_{A,a}|}\bigr)
      + |U_A|,
\end{equation*}
where $\Sign{x}=1$ if $x>0$ and $\Sign{x}=0$ otherwise,
because $\Grd{G,A}(w)$ is never greater than
the number of legal moves from a position $\{(A,w)\}$.

%%%

\subsection{Construction of $M_{A,i}$}
Let a given RLG be $G=(\Gamma,\Sigma,R,I)$ and
let the constructed DFA be $M_{A,i}=(Q,\Sigma,\delta,q_0,F_{A,i})$
for each $A\in\Gamma$ and $i\le m_A$.
Note that the Grundy number of a position
is determined by the ones of the successor positions of it,
and in the SDG on an RLG,
the word $w'$ in
a successor position $\{(B,w')\}$ of position $\{(A,w)\}$
should be a suffix of~$w$.
Hence, we design $M_{A,i}$ as a DFA that
reads a word $w$ from right to left and
keeps track of the Grundy number of
position $\{(B,w')\}$ for each nonterminal~$B$ and
each suffix $w'$ of~$w$.

The construction of
%$Q$, $q_0$, $\delta$, and $F_{A,i}$
$M_{A,i}$
is performed through the following five steps.

\begin{enumerate}
\item Divide the rule set $R$ into the following three sets
  $\Rn$, $\Ru$, and $\Rt$.
  Moreover, introduce a new symbol~$\E$ and define
  a set $\Rt'$ of modified terminal rules as follows.
  We consider $\E$ as a nonterminal and let $m_{\E}=0$.
\begin{align*}
  \Ru &= \{ A\to B  \in R \mid B\in\Gamma \}, \text{ the set of unit rules}, \\
  \Rt &= \{ A\to x  \in R \mid x\in\Sigma^+ \}, \text{ the set of terminal rules}, \\
  \Rn &= \{ A\to xB \in R \mid x\in\Sigma^+,\ B\in\Gamma \}
       = R\setminus(\Ru\cup\Rt).
  \\[\medskipamount]
  \Rt' &= \{ A\to x\E \mid A\to x\in \Rt \}.
\end{align*}

  %According to the set $\Ru$ of unit rules,
  Define a total order $<_{\Gamma}$ over $\Gamma$
  that satisfies $B <_{\Gamma} A$ whenever $A\to B\in\Ru$.
  (Such ordering can be obtained by
  topological sorting of the directed graph
  $(\Gamma,\Ru)$.
  Since unit rules of $G$ do not form a cycle,
  $(\Gamma,\Ru)$ is acyclic.)

\smallskip

\item Define the following $\Gamma'$ and $\GammaN$,
  and
  let $Q=\powset{\Gamma'\cup\GammaN}$
  and $q_0=\{\E^0\}$.
\begin{align*}
  \Gamma' &= \{ A^i \mid A\in\Gamma\cup\{\E\},\ 0\le i\le m_A \}, \\
  \GammaN &= \{ xB^i \mid A\to yxB \in \Rn\cup\Rt',\
                y,x\in\Sigma^+,\ B^i\in\Gamma' \}.
\end{align*}
%
%\begin{align*}
%  Q &= \powset{\Gamma'\cup\GammaN}, \qquad
%  q_0 = \{\E^0\}.
%\end{align*}

\item
  For $q\in Q$, $a\in\Sigma$, and $A\in\Gamma$,
  define the following $J_{A,q,a}$ and $\delta'_A(q,a)$.
  We apply this step to $A\in\Gamma$ in the order
  according to~$<_{\Gamma}$;
  thus, the value of $\delta'_B(q,a)$
  for every $B$ satisfying $A\to B\in\Ru$
  has already been determined.
\begin{align*}
  J_{A,q,a} ={} & \{ j \mid A\to axB \in \Rn\cup\Rt',\ x\in\Sigma^*,\
                     xB^j\in q \} \\
       {}\cup{} & \{ j \mid A\to B\in \Ru,\ B^j \in\delta'_B(q,a) \}. \\[\medskipamount]
  \delta'_A(q,a) ={} &
    \begin{cases}
      \emptyset                 & \text{if } J_{A,q,a} = \emptyset \\
      \{ A^{\mex(J_{A,q,a})} \} & \text{otherwise}. \\
    \end{cases}
\end{align*}

\item
  For $q\in Q$ and $a\in\Sigma$, define $\delta(q,a)$ as follows.
\begin{align*}
  \delta(q,a) &= \bigcup_{A\in\Gamma} \delta'_A(q,a) \cup
        \{ axB^j \in \GammaN \mid x\in\Sigma^*,\ xB^j\in q \}.
\end{align*}

\item
  Let $F_{A,i}=\{q\in Q\mid A^i\in q\}$
  for $A\in\Gamma$ and $i\le m_A$.
\end{enumerate}

%%%

%\hrule

\bigskip

\begin{figure}[t] \centering
  \begin{tikzpicture}[-, scale=1.0, transform shape]
    \tikzstyle{every node}=[font=\small]
    \draw (-2, 0) rectangle (3, 0.4);
    \draw (0.4, 0) -- +(0, 0.4);
    \draw (0, 0) -- +(0, 0.4);
    \node at (-1, 0.2) {$\cdots$};
    \node at (0.2, 0.2) {$a$};
    \node at (1.7, 0.12) {$\smash{w'}$};
    \draw (0, 0.5) -- ++(0, 0.1) -- ++(3, 0) -- ++(0, -0.1);
    \node[anchor=south] at (1.5, 0.6) {$w$};
    \draw[->] (3, 1) -- +(-1, 0);
    \node[anchor=south, scale=0.8, transform shape] at (2.5, 1.05)
      {read from right to left};
    \draw[->] (0, -0.4) -- +(0, 0.3);
    \draw[->] (0.4, -0.4) -- +(0, 0.3);
    \draw[->] (3, -0.4) -- +(0, 0.3);
    \node[anchor=north] (deltaqa) at (-0.3, -0.37) {$\delta(q,a)$};
    \node[anchor=north] at (0.4, -0.4) {$q$};
    \node[anchor=north] at (3, -0.4) {$q_0$};
    \node[anchor=north] at (-0.28, -1.17) {$\delta'_A(q,a)$};
    \node[anchor=north west] (Amex) at (-1, -1.97) {$A^{\mex(J_{\!A,q,a})}$};
    \node[rotate=90] (subseteq) at (-0.3, -1.07) {$\subseteq$};
    \node[rotate=90] (In) at (-0.3, -1.87) {$\in$};
    \node[font=\tiny, right=1.7 of Amex.center] (b1) {$\bullet$};
    \node[font=\tiny, above=0.2 of b1] (b2) {$\bullet$};
    \node[font=\tiny, below=0.2 of b1] (b3) {$\bullet$};
    \foreach \b in {b1, b2, b3} {
      \draw[->] (Amex) decorate[SharpSquiggly]{ -- ($(Amex)!0.8!(\b)$) } -- (\b);
    }
    \draw[decorate,decoration={brace,amplitude=5pt}]
      (2.1, -1.6) -- +(0, -1.25);
    \node[right=0.3 of b1] (JAqa) {$J_{A,q,a}$};
    \node[scale=0.72, transform shape] at (0.2, -2.9) {$\Grd{G}(\{(A,w)\})$};
    \node[rotate=90] at (0, -2.55) {$=$};
    \node[scale=0.72, transform shape, below=0.4 of JAqa.west, anchor=west]
      {${}=\{\Grd{G}(P)\mid \{(A,w)\}\posTo P\}$};
  \end{tikzpicture}
  \caption{The state of $M_{X,i}$ just after reading $\Rev{(aw')}$.}
  \label{fig:M_{X,i}}
\end{figure}%

The above DFA $M_{X,i}$
is designed based on the following idea:
Assume that $M_{X,i}$ has just read a suffix $w'$ of
some word from right to left
(or equivalently read $\Rev{(w')}$ from left to right)
and reached a state~$q$,
and then it is reading a symbol~$a$
(Fig.~\ref{fig:M_{X,i}}).
Let $w=aw'$.
Then, $J_{A,q,a}$ in Step~3 represents
the set of the Grundy numbers of the successor positions
of position~$\{(A,w)\}$.
$M_{X,i}$ has already read every suffix of~$w'$ and
keeps information on the Grundy numbers of the successor positions
of $\{(A,w)\}$ inside the state~$q$.
$J_{A,q,a}$ can be computed using the information in~$q$ and
the rules in $\Rn\cup\Rt'\cup\Ru$.
Set $\delta'_A(q,a)$,
which is either a singleton set or the empty set,
represents
the Grundy number of $\{(A,w)\}$ determined by $J_{A,q,a}$.
($J_{A,q,a}$ and $\delta'_A(q,a)$ become empty
iff $\{(A,w)\}$ is not a position,
i.e., $w\notin L(G,A)$.)
In Step~4, the information in
$\delta'_A(q,a)$ for every $A\in\Gamma$ is gathered
into $\delta(q,a)$,
with some elements of $\GammaN$ each of which represents
an intermediate state of a single production rule
and is possibly used later for computing~$J$.

%%% ----- Correctness -----

\section{Correctness of the Construction of DFAs}
\label{sec:correctness}

We fix a given RLG
$G=(\Gamma,\Sigma,R,I)$ in this section.
Let $M_{A,i} =
(Q,\Sigma,\allowbreak \delta,\allowbreak q_0,\allowbreak F_{A,i})$
for $A\in\Gamma$ and $i\le m_A$ be the DFA
constructed from~$G$
by the algorithm in Section~\ref{sec:RLGtoDFA}.
Let $\delta^* : Q\times\Sigma^*\to Q$ be the extension
of $\delta$ to strings; i.e.,
$\delta^*(q,\varepsilon) = q$ and
$\delta^*(q,xa) = \delta(\delta^*(q,x),a)$
for $q\in Q$, $x\in\Sigma^*$, and $a\in\Sigma$.

For convenience of proof,
we consider $\{(\E,\varepsilon)\}$ is a position and
is equivalent to the final empty position;
i.e.,
$\{(A,w)\} \posTo \{(\E,\varepsilon)\}$
if and only if $A\to w\in\Rt$.
Moreover, $\Grd{G,\E}(\varepsilon)=0$ and
$\Grd{G,\E}(w)=\bot$ for any $w\in\Sigma^+$.

\begin{lemma}\label{lemma:GammaN}
  For $w\in\Sigma^*$, $B\in\Gamma\cup\{\E\}$,
  $j\le m_B$,
  $x\in\Sigma^+$ such that $xB^j\in\GammaN$,
  \begin{equation*}
    xB^j\in\delta^*(q_0,w) \iff
    B^j\in\delta^*(q_0,w\RQ(\Rev{x})).
  \end{equation*}
\end{lemma}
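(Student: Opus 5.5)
Note first that the right-hand side of the lemma only makes sense when $\Rev{x}$ is a suffix of $w$. So I read the statement as asserting that $xB^j\in\delta^*(q_0,w)$ holds iff $\Rev{x}$ is a suffix of $w$ and $B^j\in\delta^*(q_0,w\RQ(\Rev{x}))$. The proof is by induction on $|x|$, and I will unfold one application of $\delta$ at each step.

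The key observation concerns Step~4 of the construction. An element of $\GammaN$ in $\delta(q,a)$ can only come from the second set in the union, because each $\delta'_A(q,a)$ contains only elements of $\Gamma'$. Hence, for any $q$, $a$ and any $xB^j\in\GammaN$,
\begin{equation*}
  xB^j\in\delta(q,a) \iff x=ax' \text{ for some } x'\in\Sigma^* \text{ with } x'B^j\in q,
\end{equation*}
where $x'B^j$ denotes $B^j$ itself when $x'=\varepsilon$. If $x'\neq\varepsilon$, then $x'B^j$ is again in $\GammaN$, because $\GammaN$ is closed under removing a nonempty proper prefix: if $A\to y\,ax'B\in\Rn\cup\Rt'$, take $y'=ya$. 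I also check that $\GammaN$ and $\Gamma'$ are disjoint, since every element of $\GammaN$ carries a nonempty $x$. Finally, $q_0=\{\E^0\}$ contains no element of $\GammaN$.

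With this, the argument runs as follows.

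\emph{Case $w=\varepsilon$.} The left side fails because $q_0$ contains no element of $\GammaN$. The right side fails because $\Rev{x}$ is nonempty and so cannot be a suffix of $\varepsilon$.

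\emph{Case $w=w_1a$.} We have $\delta^*(q_0,w)=\delta(\delta^*(q_0,w_1),a)$. By the observation, $xB^j$ lies in this set iff $x=ax'$ and $x'B^j\in\delta^*(q_0,w_1)$.
\begin{itemize}
\item If $x'=\varepsilon$, this says $B^j\in\delta^*(q_0,w_1)$, and $w_1=w\RQ a=w\RQ\Rev{x}$.
\item If $x'\ne\varepsilon$, the induction hypothesis (on $|x|$, or equivalently on $|w|$) gives $B^j\in\delta^*(q_0,w_1\RQ\Rev{x'})$. Since $\Rev{x}=\Rev{x'}a$, we have $w_1\RQ\Rev{x'}=w\RQ\Rev{x}$.
\end{itemize}
In both subcases the last letter of $w$ must equal the first letter of $x$. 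This matches the requirement that $\Rev{x}$ be a suffix of $w$, so the two sides coincide.

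The main obstacle is bookkeeping rather than mathematics. One must state precisely the convention $x'B^j=B^j$ for $x'=\varepsilon$ and the suffix side-condition, and verify that $\GammaN$ is closed under the prefix-stripping used above. I expect the rest to be routine.
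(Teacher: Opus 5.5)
Your proposal is correct and follows essentially the same route as the paper: both unfold one application of $\delta$ at a time, using the fact that elements of $\GammaN$ in $\delta(q,a)$ come only from the second set in Step~4, so each step matches one letter of $x$ against the last letter of $w$. The paper does this as an informal repeated expansion, whereas you make it an explicit induction and state the suffix side-condition and the prefix-closure of $\GammaN$ up front.
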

\begin{proof}
  Let $k=|x|$ and $x=x_1\ldots x_k$ where
  $x_i\in\Sigma$ for $1\le i\le k$.
  Assume that $xB^j\in\GammaN$.
  By the definition of $\GammaN$,
  $xB^j\in\GammaN$ implies
  $x_2\ldots x_k B^j\in\GammaN$, \ldots,
  $x_k B^j\in \GammaN$.

  \smallskip
  (Only-if part) \
  Assume that $xB^j\in\delta^*(q_0,w)$.
  Since $q_0=\{\E^0\}$ and $x\in\Sigma^+$,
  $w\ne\varepsilon$ holds.
  Let $w=w'a$ where
  $w'\in\Sigma^*$ and $a\in\Sigma$.
  Since $\delta^*(q_0,w)=\delta(\delta^*(q_0,w'),a)$ and
  by the definition of $\delta$,
  the following equivalence holds.
  \begin{equation*}
    xB^j\in\delta^*(q_0,w) \iff
    x_1 = a \:\land\:
    x_2\ldots x_k B^j \in \delta^*(q_0,w').
  \end{equation*}
  Applying the same expansion to
  $x_2\ldots x_k B^j \in \delta^*(q_0,w')$
  repeatedly,
  we obtain the following equivalence.
  \begin{align*}
    xB^j\in\delta^*(q_0,w) \iff{} &
    \exists w''.\ w = w''\Rev{x} \:\land\:
    B^j \in \delta^*(q_0,w'').
  \end{align*}
  The right-hand side implies
  $B^j\in\delta^*(q_0,w\RQ(\Rev{x}))$.

  \smallskip
  (If-part) \
  Assume that
  $B^j\in\delta^*(q_0,w\RQ(\Rev{x}))$.
  This means that $w=w''\Rev{x}$ for some $w''\in\Sigma^*$.
  Therefore,
  by traversing the above equivalence in the only-if part
  conversely,
  we obtain $xB^j\in\delta^*(q_0,w)$.
  \qed
\end{proof}

\begin{lemma}\label{lemma:main}
  For $w\in\Sigma^*$, $A\in\Gamma\cup\{\E\}$, $i\le m_A$,
  \begin{equation*}
    A^i\in \delta^*(q_0,w) \iff \Grd{G,A}(\Rev{w})=i.
  \end{equation*}
\end{lemma}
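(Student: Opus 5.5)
We argue by induction on $|w|$, and for words of equal length by an inner induction along the order $<_{\Gamma}$ on nonterminals. Lemma~\ref{lemma:GammaN} will translate membership of intermediate symbols $xB^j$ into facts about shorter prefixes of~$w$.

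\emph{Base case $w=\varepsilon$.} Here $\delta^*(q_0,\varepsilon)=q_0=\{\E^0\}$. The right-hand side holds exactly when $A=\E$ and $i=0$: we have $\Grd{G,\E}(\varepsilon)=0$, and $\Grd{G,A}(\varepsilon)=\bot$ for $A\in\Gamma$ because $L(G,A)\subseteq\Sigma^+$. So both sides agree.

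\emph{Inductive step.} Write $w=w'a$ with $a\in\Sigma$ and $q=\delta^*(q_0,w')$, and put $v=\Rev{w}=a\Rev{w'}$. For $A=\E$ the claim is immediate. Indeed, $\delta(q,a)$ contains no $\E^i$, since only the $\delta'_A$ with $A\in\Gamma$ contribute elements of $\Gamma'$. On the other side, $\Grd{G,\E}(v)=\bot$ because $v\in\Sigma^+$.

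For $A\in\Gamma$, the definition of $\delta$ gives $A^i\in\delta(q,a)$ iff $A^i\in\delta'_A(q,a)$. By the definition of $\delta'_A$, this holds iff $J_{A,q,a}\neq\emptyset$ and $\mex(J_{A,q,a})=i$. The core claim is therefore
\begin{equation*}
  J_{A,q,a}=\{\Grd{G}(P)\mid \{(A,v)\}\posTo P\}.
\end{equation*}
We must also check that $J_{A,q,a}=\emptyset$ exactly when $v\notin L(G,A)$. This follows because $v\in L(G,A)$ iff some first move from $\{(A,v)\}$ exists, given that every position in this game is a single pair or empty. With both facts, the Grundy definition yields the lemma.

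We establish the core claim by splitting on the type of the first rule applied.

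\emph{Rules in $\Rn\cup\Rt'$.} A move via $A\to axB$ leads to $\{(B,z)\}$ with $v=axz$; for $B=\E$ this is the final position, encoded as $(\E,\varepsilon)$. The contribution of such rules to $J_{A,q,a}$ is the set of $j$ with $xB^j\in q$.
\begin{itemize}
\item If $x\in\Sigma^+$, then $xB\in\GammaN$ form (taking $y=a$), so Lemma~\ref{lemma:GammaN} gives $xB^j\in q$ iff $B^j\in\delta^*(q_0,w'\RQ\Rev{x})$.
\item If $x=\varepsilon$, the condition reads directly $B^j\in q$.
\end{itemize}
In both cases the relevant word $w'\RQ\Rev{x}$ is shorter than $w$. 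The outer induction hypothesis, applied to it (including $B=\E$), turns the condition into $\Grd{G,B}(z)=j$ with $z=\Rev{(w'\RQ\Rev{x})}$, that is, $v=axz$. One detail needs care: when $\Rev{x}$ is not a suffix of $w'$, both sides must be false. For $x\neq\varepsilon$ the proof of Lemma~\ref{lemma:GammaN} shows $xB^j\in q$ forces $\Rev{x}$ to be a suffix, so this holds.

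\emph{Rules in $\Ru$.} A move via $A\to B$ leads to $\{(B,v)\}$. The contribution is the set of $j$ with $B^j\in\delta'_B(q,a)$. Since $B<_{\Gamma}A$, the inner induction hypothesis for the same $w$ gives $B^j\in\delta'_B(q,a)$ iff $\Grd{G,B}(v)=j$.

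Combining the two cases gives the displayed equality. The main obstacle is organizing the double induction cleanly. The unit-rule case needs the statement for $B$ at the same length, so the inner hypothesis must be phrased in terms of $\delta'_B(q,a)$ (equivalently $\delta(q,a)$) rather than recomputed. It must also be made precise that $\Grd{G,B}(v)=\bot$ corresponds to $\delta'_B(q,a)=\emptyset$, so that non-positions contribute nothing to $J$.
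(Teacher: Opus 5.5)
Your proposal is correct and follows essentially the same route as the paper. Both use an outer induction on $|w|$, an inner induction along $<_{\Gamma}$, and the key identity $J_{A,q,a}=\{\Grd{G}(P)\mid \{(A,\Rev{w})\}\posTo P\}$, proved via Lemma~\ref{lemma:GammaN} for rules in $\Rn\cup\Rt'$ and via the inner hypothesis for unit rules. You are slightly more explicit than the paper in handling $A=\E$ in the inductive step and the case where $\Rev{x}$ is not a suffix of $w'$.
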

\begin{proof}
  We prove this lemma by induction on the length of~$w$.

  \smallskip
  (Base case) \
  Since $\delta^*(q_0,\varepsilon)=q_0=\{\E^0\}$,
  $A^i\in\delta^*(q_0,\varepsilon)$ implies
  $A=\E$ and $i=0$,
  and hence $\Grd{G,A}(\Rev{w})=\Grd{G,\E}(\varepsilon)=i$
  holds.
  For the converse,
  $\Grd{G,A}(\varepsilon)\ne\bot$ implies
  $A=\E$ and $\Grd{G,A}(\varepsilon)=0$,
  and hence $A^i=\E^0\in\delta^*(q_0,\varepsilon)$ holds.

  \smallskip
  (Inductive step) \
  Assume $|w|>0$.
  As the induction hypothesis,
  we assume that for every $y\in\Sigma^*$ with $|y|<|w|$,
  $B\in\Gamma\cup\{\E\}$, and $j\le m_B$,
  \begin{equation}
    B^j\in \delta^*(q_0,y) \iff \Grd{G,B}(\Rev{y})=j.
    \label{eq:IH1}
  \end{equation}

  When $|w|>0$,
  $A^i\in\delta^*(q_0,w)$ implies $A\ne\E$ and
  so does $\Grd{G,A}(\Rev{w})\ne\bot$.
  Hence, we assume $A\ne\E$ in the following.

  Let $w=w'a$ for $w'\in\Sigma^*$ and $a\in\Sigma$.
  Also let $q=\delta^*(q_0,w')$.
  Note that $\delta^*(q_0,w)=\delta(q,a)$.
  We define $S$ as the set of the successor positions
  of $\{(A,\Rev{w})\}$; i.e.,
  $S=\{s\mid \{(A,\Rev{w})\} \posTo s \}$.
  We show
  $A^i\in \delta^*(q_0,w) \iff \Grd{G,A}(\Rev{w})=i$
  by induction on the order of $A$ according to~$<_{\Gamma}$.
  A key point is to show that
  $J_{A,q,a}=\{\Grd{G}(s)\mid s\in S\}$ and thus
  $\Grd{G}(\{(A,\Rev{w})\})=\mex(J_{A,q,a})$.

  \smallskip
  (Base case of the inner induction) \
  Assume that there is no $B\in\Gamma$ such that $A\to B\in\Ru$.
  Hence, $J_{A,q,a}=\{j\mid A\to axB\in\Rn\cup\Rt'$,
  $x\in\Sigma^*$, $xB^j\in q\}$.

  Assume that $\{(B,y)\}\in S$ for $B\in\Gamma\cup\{\E\}$
  and $y\in\Sigma^*$.
  By the definition of the game,
  $\{(B,y)\}\in S$ if and only if
  $A\to axB\in \Rn\cup\Rt'$ for some $x\in\Sigma^*$ such that
  $\Rev{w}=axy$ and $\Grd{G,B}(y)\ne\bot$.
  Let $j=\Grd{G,B}(y)$.
  By the induction hypothesis~(\ref{eq:IH1}),
  $\Grd{G,B}(y)\ne\bot$ implies that
  $B^j\in \delta^*(q_0,\Rev{y})$.
  If $x\ne\varepsilon$, then
  $A\to axB\in\Rn\cup\Rt'$ implies $xB^j\in\GammaN$,
  %by the definition of~$\GammaN$
  which is followed by
  $xB^j\in\delta^*(q_0,w')$
  by Lemma~\ref{lemma:GammaN}.
  If $x=\varepsilon$, then
  $w'=\Rev{y}$ and thus $xB^j\in\delta^*(q_0,w')$ holds.
  Therefore,
  $\Grd{G,B}(y)\in J_{A,q,a}$ holds in any case,
  and hence $\{\Grd{G}(s)\mid s\in S\}\subseteq J_{A,q,a}$.

  For the converse, assume that $j\in J_{A,q,a}$.
  Then,
  $A\to axB\in\Rn\cup\Rt'$ for some $B\in\Gamma\cup\{\E\}$ and
  $x\in\Sigma^*$ and $xB^j\in q=\delta^*(q_0,w')$.
  If $x=\varepsilon$,
  then $\Grd{G,B}(\Rev{(w')})=j$ holds
  by the induction hypothesis~(\ref{eq:IH1}).
  If $x\ne\varepsilon$,
  then $xB^j\in\GammaN$, which is followed by
  $B^j\in\delta^*(q_0,w'\RQ(\Rev{x}))$
  by Lemma~\ref{lemma:GammaN}.
  This implies $\Grd{G,B}(x\LQ\Rev{(w')})=j$
  by~(\ref{eq:IH1}).
  In any case,
  $\Grd{G,B}(y)=j\ne\bot$
  for $y=(ax)\LQ(\Rev{w})$,
  and thus $\{(B,y)\}\in S$.
  Therefore, $J_{A,q,a}\subseteq \{\Grd{G}(s)\mid s\in S\}$.
  Since we have shown
  $J_{A,q,a}=\{\Grd{G}(s)\mid s\in S\}$,
  $\Grd{G}(\{(A,\Rev{w})\})=\mex(J_{A,q,a})$ holds.

  Assume that $A^i\in\delta^*(q_0,w)=\delta(q,a)$.
  By the definition of $\delta$,
  $A^i\in\delta'_A(q,a)$ holds.
  By the definition of $\delta'_A$,
  $J_{A,q,a}\ne\emptyset$ and $\mex(J_{A,q,a})=i$,
  which is followed by
  $\Grd{G,A}(\Rev{w})=i$.
  For the converse,
  assume that $\Grd{G,A}(\Rev{w})=i$.
  Since $\Grd{G,A}(\Rev{w})\ne\bot$,
  $\Rev{w}\in L(G,A)$ and thus $S\ne\emptyset$.
  Hence, $J_{A,q,a}\ne\emptyset$ and
  $\delta'_A(q,a)=\{A^i\}$,
  which is followed by $A^i\in\delta(q,a)$.

  \smallskip
  (Inductive step of the inner induction) \
  As the induction hypothesis,
  we assume that for every
  $B\in\Gamma$ with $B <_{\Gamma} A$ and $j\le m_B$,
  \begin{equation}
    B^j\in \delta^*(q_0,w) \iff \Grd{G,B}(\Rev{w})=j.
    \label{eq:IH2}
  \end{equation}
  Note that
  by the definition of $\delta$,
  $B^j\in\delta^*(q_0,w)$ if and only if
  $B^j\in\delta'_B(q,a)$.

  By definition,
  $J_{A,q,a}=\{j\mid A\to axB\in\Rn\cup\Rt'$,
  $x\in\Sigma^*$, $xB^j\in q\}\cup\{j\mid A\to B\in\Ru$,
  $B^j\in\delta'_B(q,a)\}$.
  In a similar way to the base case,
  we can show that $J_{A,q,a}=\{\Grd{G}(s)\mid s\in S\}$,
  and thus
  $\Grd{G}(\{(A,\Rev{w})\})=\mex(J_{A,q,a})$ holds.
  In the same way as the base case,
  we obtain $A^i\in\delta(q,a) \iff \Grd{G,A}(\Rev{w})=i$.
  \qed
\end{proof}

\begin{theorem}
  \label{theorem:main}
  For $w\in\Sigma^+$, $A\in\Gamma$, $i\le m_A$,
  \begin{equation*}
    %A^i\in \delta^*(q_0,\Rev{w})
    \Rev{w}\in L(M_{A,i}) \iff \Grd{G,A}(w)=i.
  \end{equation*}
\end{theorem}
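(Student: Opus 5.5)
The theorem is essentially a direct corollary of Lemma~\ref{lemma:main}, so the plan is to unfold the definitions of $L(M_{A,i})$ and $F_{A,i}$ and then apply that lemma to the reversed word. Fix $w\in\Sigma^+$, $A\in\Gamma$ and $i\le m_A$.

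First, by the definition of acceptance for the DFA $M_{A,i}=(Q,\Sigma,\delta,q_0,F_{A,i})$, we have $\Rev{w}\in L(M_{A,i})$ if and only if $\delta^*(q_0,\Rev{w})\in F_{A,i}$. By Step~5 of the construction, $F_{A,i}=\{q\in Q\mid A^i\in q\}$. This gives the equivalence
\begin{equation*}
  \Rev{w}\in L(M_{A,i}) \iff A^i\in\delta^*(q_0,\Rev{w}).
\end{equation*}
Here $A^i\in\Gamma'$ is well-defined because $i\le m_A$. Hence $F_{A,i}$ is a legitimate subset of $Q=\powset{\Gamma'\cup\GammaN}$.

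Second, we apply Lemma~\ref{lemma:main} with the word $\Rev{w}\in\Sigma^*$ in place of its $w$. This is allowed because $A\in\Gamma\subseteq\Gamma\cup\{\E\}$ and $i\le m_A$. The lemma yields $A^i\in\delta^*(q_0,\Rev{w})\iff\Grd{G,A}(\Rev{(\Rev{w})})=i$. Since $\Rev{(\Rev{w})}=w$, chaining this with the first equivalence gives the claim.

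No step is a real obstacle, since all the substantive work (Lemmas~\ref{lemma:GammaN} and~\ref{lemma:main}) is already done. The only points to be careful about are bookkeeping ones. One must check that the hypothesis $w\in\Sigma^+$ is compatible with the lemma, which is stated for all of $\Sigma^*$. One must also check that the $\bot$ case is handled. If $w\notin L(G,A)$, then $\Grd{G,A}(w)=\bot\neq i$, and the lemma guarantees $A^i\notin\delta^*(q_0,\Rev{w})$, so both sides are false.
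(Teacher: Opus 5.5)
Your proposal is correct and is the same argument as the paper's proof. You unfold acceptance through $F_{A,i}=\{q\in Q\mid A^i\in q\}$ to get $\Rev{w}\in L(M_{A,i})\iff A^i\in\delta^*(q_0,\Rev{w})$, then apply Lemma~\ref{lemma:main} to the word $\Rev{w}$, using $\Rev{(\Rev{w})}=w$.
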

\begin{proof}
  This follows directly from Lemma~\ref{lemma:main},
  since $\Rev{w}\in L(M_{A,i})$ if and only if
  $A^i\in \delta^*(q_0,\Rev{w})$
  by the construction of $F_{A,i}$.
  \qed
\end{proof}

\begin{corollary}
  For any $A\in\Gamma$ and $c\le m_A$,
  the set $\{w\in\Sigma^+ \mid \Grd{G,A}(w)=c\}$,
  which represents the set of positions whose
  nonterminal parts are $A$ and Grundy numbers equal~$c$,
  is regular.
\end{corollary}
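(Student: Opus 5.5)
The corollary should follow from Theorem~\ref{theorem:main} together with the closure of regular languages under reversal. Fix $A\in\Gamma$ and $c\le m_A$, and let $L_c=\{w\in\Sigma^+\mid \Grd{G,A}(w)=c\}$.

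First I relate $L_c$ to the DFA $M_{A,c}$. The state set $Q=\powset{\Gamma'\cup\GammaN}$ is finite: $\Gamma'$ is finite because each $m_A$ is a finite number, as defined in Section~\ref{sec:m_A}. $\GammaN$ is finite because it consists only of suffixes $x$ of the finitely many right-hand sides of rules in $\Rn\cup\Rt'$, paired with elements of $\Gamma'$. Hence $M_{A,c}$ is a genuine DFA and $L(M_{A,c})$ is regular. By Theorem~\ref{theorem:main}, for every $w\in\Sigma^+$ we have $\Rev{w}\in L(M_{A,c})$ iff $w\in L_c$. Therefore
\begin{equation*}
  L_c=\Rev{\bigl(L(M_{A,c})\cap\Sigma^+\bigr)}.
\end{equation*}

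Next I check the empty word, which is excluded from $L_c$ and is the only point needing care. The start state is $q_0=\{\E^0\}$, which contains no $A^c$ with $A\in\Gamma$, so $q_0\notin F_{A,c}$. Thus $\varepsilon\notin L(M_{A,c})$, and the intersection with $\Sigma^+$ can be dropped. Even without this observation, $\Sigma^+$ is regular and regular languages are closed under intersection, so the conclusion would still hold.

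Finally, regular languages are closed under reversal: reverse every transition of the DFA, swap initial and final states, and determinize. Hence $L_c$ is regular. There is no real obstacle here; the substantive work is already contained in Lemma~\ref{lemma:main}, and the only points to verify are the finiteness of $Q$ and the handling of $\varepsilon$, both settled above.
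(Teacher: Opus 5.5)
Your proposal is correct and follows the paper's proof: Theorem~\ref{theorem:main} identifies the set with the reversal of $L(M_{A,c})$ restricted to $\Sigma^+$, and regularity then follows from closure of regular languages under reversal. Your extra checks, that $Q$ is finite and that $q_0\notin F_{A,c}$ so $\varepsilon\notin L(M_{A,c})$, are correct details the paper leaves implicit.
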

\begin{proof}
  This follows directly from Theorem~\ref{theorem:main},
  since the class of regular languages is closed under the reverse operation.
  \qed
\end{proof}

\begin{corollary}
  \label{coro:decidability-of-maxG}
  \textsc{\textsl{Max-Grundy-Number-in-SDG}}
  where the input CFG $G$ is restricted to be right-linear,
  described below,
  is decidable.
  \begin{itemize}
  \item Instance: an RLG $G=(\Gamma,\Sigma,R,I)$,
    a nonterminal $A\in\Gamma$, and $k\in\Nat$.
  \item Question: $\max_{w\in L(G,A)}\bigl(\Grd{G,A}(w)\bigr) = k$ ?
  \end{itemize}
\end{corollary}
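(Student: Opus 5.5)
The plan is to reduce the question to a finite family of emptiness checks on the DFAs $M_{A,i}$ of Section~\ref{sec:RLGtoDFA}, using Theorem~\ref{theorem:main}.

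First, given an instance $(G,A,k)$, I compute the bound $m_A$ from Section~\ref{sec:m_A}. This is a finite quantity read off directly from the rule set $R$. Every $w\in L(G,A)$ satisfies $\Grd{G,A}(w)\le m_A$, because the Grundy number of $\{(A,w)\}$ is at most the number of legal moves from it. So if $k>m_A$, I answer \emph{no} immediately.

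Otherwise I carry out the five construction steps to obtain the common state set $Q=\powset{\Gamma'\cup\GammaN}$, the transition function $\delta$, and the final-state sets $F_{A,i}$ for $0\le i\le m_A$. Each step is effective:
\begin{itemize}
\item $\Gamma'$ and $\GammaN$ are finite sets determined by $R$ and the bounds $m_B$.
\item The order $<_\Gamma$ is obtained by topological sorting of the acyclic graph $(\Gamma,\Ru)$.
\item $J_{A,q,a}$, $\mex$, and $\delta(q,a)$ are computed by finite set operations, processing nonterminals along $<_\Gamma$.
\end{itemize}
Hence each $M_{A,i}$ is an explicitly given, if exponentially large, DFA.

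Next I use Theorem~\ref{theorem:main}. For $w\in\Sigma^+$ it gives $\Rev{w}\in L(M_{A,i})$ iff $\Grd{G,A}(w)=i$. Since $\Grd{G,A}(w)\ne\bot$ exactly when $w\in L(G,A)$, and $\varepsilon\notin L(G,A)$, I obtain
\[
\exists w\in L(G,A).\ \Grd{G,A}(w)=i \iff L(M_{A,i})\cap\Sigma^+\ne\emptyset .
\]
The only care needed is with $\varepsilon$. The start state $q_0=\{\E^0\}$ contains no $A^i$ with $A\in\Gamma$, so $q_0\notin F_{A,i}$ and $\varepsilon\notin L(M_{A,i})$. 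Therefore the condition is simply nonemptiness of $L(M_{A,i})$, which is decidable by a reachability search from $q_0$ to $F_{A,i}$ in the transition graph.

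Finally, I answer \emph{yes} iff $L(M_{A,k})\ne\emptyset$ and $L(M_{A,i})=\emptyset$ for every $i$ with $k<i\le m_A$. This is exactly the displayed equivalence stated at the start of Section~\ref{sec:RLGtoDFA}. Note that $L(G,A)\ne\emptyset$ is implied by the first conjunct, so the maximum is well defined whenever the answer is \emph{yes}.

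No step is a genuine obstacle, since correctness is already carried by Theorem~\ref{theorem:main}. The only points needing attention are the bookkeeping that $\varepsilon$ is never accepted and the effectiveness of the construction, in particular that the inner computation of $\delta'_A$ along $<_\Gamma$ terminates because the unit rules are acyclic.
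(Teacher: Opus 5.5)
Your proposal is correct and follows the paper's proof: you decide the question by checking $L(M_{A,k})\ne\emptyset$ and $L(M_{A,i})=\emptyset$ for every $i$ with $k<i\le m_A$, and correctness rests on Theorem~\ref{theorem:main}. Your extra bookkeeping makes explicit what the paper's one-line proof leaves implicit: rejecting when $k>m_A$, observing that $\varepsilon\notin L(M_{A,i})$ because $q_0=\{\E^0\}$, and noting that the construction is effective.
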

\begin{proof}
  We can answer this problem by checking whether
  $L(M_{A,k})\ne\emptyset$ and
  $L(M_{A,i})=\emptyset$
  for each $i$ with $k < i\le m_A$.
  \qed
\end{proof}

%%% ----- PSPACE-completeness -----

\section{PSPACE-Completeness of
  %Computing the Least Upper Bound of Grundy Numbers in the SDG
  \textsc{\textbf{Max-Grundy-Number-in-SDG}}
  on RLGs}
\label{sec:complexity}

\begin{lemma}
  \label{lemma:in-PSPACE}
  \textsc{\textsl{Max-Grundy-Number-in-SDG}}
  is in PSPACE
  if the input CFG is restricted to be right-linear.
\end{lemma}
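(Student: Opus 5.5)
The plan is to avoid building the exponential-size DFAs $M_{A,i}$ explicitly. Instead, we test emptiness of $L(M_{A,i})$ on the fly, using nondeterministic polynomial space and Savitch's theorem ($\mathrm{NPSPACE}=\mathrm{PSPACE}$). By Corollary~\ref{coro:decidability-of-maxG}, deciding whether $\max_{w\in L(G,A)}\Grd{G,A}(w)=k$ reduces to checking $L(M_{A,k})\ne\emptyset$ and $L(M_{A,i})=\emptyset$ for each $k<i\le m_A$. We first dispose of trivial cases. If $k>m_A$ the answer is ``no''. Note that $m_A\le |R|$ is polynomial in the input size, so there are only polynomially many indices $i$ to check. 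The input $k$ may be given in binary, but comparing it with $m_A$ is easy.

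The key observation concerns the size of a state $q\in Q=\powset{\Gamma'\cup\GammaN}$. The set $\Gamma'$ has at most $\sum_{A}(m_A+1)+1$ elements, which is polynomial. Each element $xB^j$ of $\GammaN$ is determined by a rule of $\Rn\cup\Rt'$, a proper suffix position in its right-hand side, and $j\le m_B$. So $|\GammaN|$ is polynomial in the size of $G$, and any $q$ can be written as a bit vector of polynomial length.

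Next we check that $\delta(q,a)$ is computable in polynomial time from $q$ and $a$. To do this, process the nonterminals in the order $<_\Gamma$, which is obtained by topological sorting. For each $A$, form $J_{A,q,a}$ by scanning the rules and looking up memberships in $q$ or in the already computed sets $\delta'_B(q,a)$. Then take $\mex$, and finally add the shifted $\GammaN$ elements. Likewise, membership $q\in F_{A,i}$ is a single bit test.

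Nonemptiness of $L(M_{A,i})$ is now checked by a nondeterministic procedure. Start from $q_0=\{\E^0\}$, guess the next symbol $a$, and replace the current state by $\delta(q,a)$, storing only the current state and a step counter. Accept if some reached state after at least one step contains $A^i$. Only words of length at least one matter, since $L(G,A)\subseteq\Sigma^+$; note also that $A^i\notin q_0$ anyway. The counter bounds the run at $|Q|=2^{|\Gamma'\cup\GammaN|}$ steps, so it needs only polynomially many bits. This places nonemptiness in NPSPACE, hence in PSPACE, and emptiness in PSPACE by closure under complement.

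The main procedure runs these polynomially many PSPACE tests sequentially, reusing the same space, so the problem is in PSPACE. The step needing the most care is the polynomial bound on $|\Gamma'\cup\GammaN|$ together with the polynomial-time evaluation of $\delta$, in particular the ordered computation of the $\delta'_B$ for unit rules. Both follow directly from the construction in Section~\ref{sec:RLGtoDFA}.
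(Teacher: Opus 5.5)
Your proposal is correct and is essentially the paper's proof: you simulate $M_{A,i}$ on the fly by guessing symbols, store only the current state (a polynomial-size subset of $\Gamma'\cup\GammaN$, using $m_A\le|R|$) together with a binary step counter bounded by $|Q|$, and then invoke $\mathrm{PSPACE}=\mathrm{NPSPACE}$. The extra points you spell out are details the paper leaves implicit, not a different route: polynomial-time evaluation of $\delta$ along $<_{\Gamma}$, the case $k>m_A$ with $k$ in binary, and closure under complement for the emptiness checks.
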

\begin{proof}
  By the proof of Corollary~\ref{coro:decidability-of-maxG},
  it is sufficient to show
  the PSPACE-solvability of deciding $L(M_{A,i})\ne\emptyset$
  for each $k\le i\le m_A$.
  Since $\mathrm{PSPACE}=\mathrm{NPSPACE}$,
  we show a non-deterministic polynomial-space algorithm
  for deciding $L(M_{A,i})\ne\emptyset$.

  Let $G=(\Gamma,\Sigma,R,I)$ be a given RLG and
  $M_{A,i}=(Q,\Sigma,\delta,q_0,F_{A,i})$ be the constructed
  DFA for $G$, $A\in\Gamma$ and $i\le m_A$.
  Note that $m_A \le |R|$ for each $A\in\Gamma$
  by the discussion in Section~\ref{sec:m_A}.
  Obviously, $L(M_{A,i})\ne\emptyset$ if and only if
  there exists a path $q_0q_1\ldots q_n$ of states
  following the transition function $\delta$,
  starting from $q_0$ and ending at $q_n\in F_{A,i}$,
  such that $n < |Q|$.
  Since $Q=\powset{\Gamma'\cup\GammaN}$ and
  $|\Gamma'|+|\GammaN|$ is a polynomial of
  the description length of $G$,
  each $q\in Q$ can be stored in polynomial space.
  The algorithm first writes $q_0$
  into the tape and
  repeatedly updates state $q$ on the tape to %the next state
  $\delta(q,a)$
  by non-deterministically choosing $a\in\Sigma$.
  %The algorithm repeats the update at most $|Q|$ times.
  If it reaches a state in $F_{A,i}$
  within $|Q|$ update steps,
  then accept; otherwise, reject.
  The counter for counting the number of repetition
  at most $|Q|$ in binary also requires only polynomial space.
  Therefore, the algorithm is
  a non-deterministic polynomial-space one
  that decides whether $L(M_{A,i})\ne\emptyset$.
  \qed
\end{proof}

\begin{theorem}
  \label{lemma:PSPACE-comp}
  \textsc{\textsl{Max-Grundy-Number-in-SDG}}
  where the input CFG is restricted to be right-linear
  is PSPACE-complete.
\end{theorem}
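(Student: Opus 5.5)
Membership in PSPACE is Lemma~\ref{lemma:in-PSPACE}, so only PSPACE-hardness remains. I would reduce from a standard PSPACE-complete problem about finite automata, namely \emph{DFA intersection non-emptiness}: given DFAs $D_1,\dots,D_n$ over a common alphabet $\Delta$, decide whether $\bigcap_k L(D_k)\ne\emptyset$. The reduction must produce an RLG $G$, a nonterminal $A$, and a number $k$ such that $\max_{w\in L(G,A)}\Grd{G,A}(w)=k$ if and only if the intersection is nonempty.

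The construction exploits the fact that a unit rule lets the mover pass the whole word to another nonterminal without consuming anything. From $(A,w)$ the mover can therefore choose among the positions $(B_k,w)$.

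For each DFA $D_k$, I build a right-linear subgrammar with nonterminals $B_{k,p}$, one per state $p$ of $D_k$. Its rules are $B_{k,p}\to a B_{k,p'}$ for each transition $p\xrightarrow{a}p'$, and terminal rules $B_{k,p}\to a$ whenever the transition on $a$ leads to an accepting state. A symbol $\#$ can be appended so that words never die prematurely.

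The intended invariant is that $\Grd{G,B_{k,q_0}}(w)$ encodes whether $w\in L(D_k)$: for example, $1$ if accepted and $0$ otherwise. To make such a parity-free invariant hold, I would pad every move so that every successful derivation is "forced". Concretely, I would use an encoding in which each letter of $\Delta$ is replaced by a word that lets exactly one player choose, or I would add a dead-end sibling via an extra sink nonterminal $Z$ with $Z\to a Z\mid a$ for all $a$. Computing Grundy values along a forced chain alternates $0,1,0,\dots$, so the cleanest gadget is length-doubling. Each rule consumes two symbols $a\#$, the words have the form $a_1\#a_2\#\cdots$, and the last rule depends on acceptance. Under this encoding the Grundy value of $(B_{k,q_0},w)$ is $1$ exactly when $w$ is accepted.

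The top gadget is $A\to B_{1,q_0}\mid\cdots\mid B_{n,q_0}$, together with auxiliary unit-rule options $A\to C_0,\dots$ that supply the Grundy values $2,\dots$. The aim is that $\Grd{G,A}(w)$ attains its maximum $k$ iff every $B_{k,q_0}$ has value $1$ on $w$. A mex over a set $\{1,\dots\}$ does not test conjunction directly, however. I would therefore instead make $A$ offer the option set $\{\Grd{}(B_{k},w)\}\cup\{0\}$ via a fresh always-zero nonterminal, and chain the $k$'s. Define $A_n\to B_n \mid Z_0$ and $A_{k}\to A_{k+1}\mid B_k\mid Z_0$ with suitable value shifts, so that values climb to $k=n+1$ only if each $B_k$ contributes value $1$.

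The main obstacle I expect is precisely this conjunction gadget. Mex of options computes "smallest missing", so it is necessary to arrange value sets so that reaching $k$ requires all $n$ acceptance bits. First I would try giving the $k$-th DFA gadget Grundy value $k$ on acceptance and $0$ otherwise, built from a forced chain plus $k-1$ auxiliary options. Then $A\to B_1\mid\cdots\mid B_n\mid Z_0$ has mex $n+1$ iff all accept. The remaining routine checks are that the unit rules are acyclic, the size is polynomial, and $k=n+1$ is never exceeded.
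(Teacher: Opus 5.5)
Your final conjunction step is sound, provided you have the right gadgets. Suppose each $P_k$ satisfies $\Grd{G,P_k}(w)=k$ for $w\in L(D_k)$ and $\Grd{G,P_k}(w)\in\{0,\bot\}$ otherwise, and $Z_0$ is constantly $0$. Then $A\to P_1\mid\cdots\mid P_n\mid Z_0$ has value $n+1$ exactly on the intersection, and never more. The gap is that you never build these gadgets, and the constructions you sketch do not work.

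\textbf{The DFA chain.} In your subgrammar, $L(G,B_{k,p})$ is exactly the set of nonempty words accepted from $p$. So for a rejected $w$, $(B_{k,q_0},w)$ is not a position at all: its value is $\bot$, not $0$. For an accepted $w$, every position on the chain has exactly one legal move. Hence $\Grd{G,B_{k,q_0}}(w)=|w|\bmod 2$, the parity of the number of rule applications. Re-encoding letters as $a\#$ changes how many symbols a move consumes, not the alternation $0,1,0,\dots$. So your claim that the value is $1$ exactly on accepted words is false.

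\textbf{The sink.} Your sink has the same defect: $Z\to aZ\mid a$ gives $\Grd{G,Z}(w)=|w|\bmod 2$, not constantly $0$. So neither $Z_0$ nor the constant-valued $C_j$ exist as described.

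\textbf{The missing idea.} Discard the actual Grundy value and keep only its definedness, using unit rules. For a nonterminal $B$, add $Y\to B$, $X\to Y\mid B$, $W\to X$.
\begin{itemize}
\item If $g=\Grd{G,B}(w)\ne\bot$, then $\Grd{G,Y}(w)=\mex(\{g\})$. One of $g$ and $\mex(\{g\})$ is $0$, so $\Grd{G,X}(w)>0$ and hence $\Grd{G,W}(w)=0$.
\item If $g=\bot$, then $Y$, $X$, $W$ are all $\bot$ on $w$.
\end{itemize}

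This one lemma supplies everything you need:
\begin{itemize}
\item Applied to a nonterminal deriving $\Sigma^+$, it gives a constant-$0$ nonterminal $C_0$.
\item Then $C_j\to C_0\mid\cdots\mid C_{j-1}$ gives a nonterminal with constant value $j$.
\item With $W_k$ built from $B_{k,q_0}$, the rule $P_k\to W_k\mid C_1\mid\cdots\mid C_{k-1}$ has value $k$ on $L(D_k)\setminus\{\varepsilon\}$. Elsewhere it has value $0$ (or $\bot$ when $k=1$).
\end{itemize}
Take $Z_0:=C_0$ in your top rule, and handle $\varepsilon$ separately, since SDG words are nonempty. With these pieces your reduction from DFA intersection goes through.

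\textbf{Comparison with the paper.} The paper uses exactly this normalization: its chains $I_4$--$I_6$ (over $I_{\ALL}$) and $I_7$--$I_9$ (over $I$). However, it reduces from regular-grammar non-universality, which needs only a single membership bit. There $\Grd{G',I_1}(w)$ is $0$ for $w\in L(G)$ and $1$ otherwise, so no conjunction gadget or constants beyond $0$ and $1$ are needed.
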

\begin{proof}
  By Lemma~\ref{lemma:in-PSPACE},
  it is sufficient to show the PSPACE-hardness of
  the problem.
  We show a polynomial-time reduction from
  the following PSPACE-complete problem~\cite{AHU74,GJ79}.
  %\cite[Sect.~10.6]{AHU74}
  %
  \begin{itemize}
  \item[] \hspace{-1em}%
    \textsc{Regular-Grammar-Non-Universality}
  \item Instance: a regular grammar $G=(\Gamma,\Sigma,R,I)$
    (without $\varepsilon$-rules).
  \item Question: $L(G) \ne \Sigma^+$ ?
  \end{itemize}
  A \emph{regular grammar} (RG) is a right-linear grammar where
  the right-hand side of each rule is an element of
  $\Sigma\cup\Sigma\Gamma\cup\Gamma$.
  Note that the definition of a CFG in this paper does not allow
  $\varepsilon$-rules, i.e., rules of the form $A\to\varepsilon$.
  Because checking whether $\varepsilon\in L(G)$ and
  eliminating $\varepsilon$-rules
  in an RG with $\varepsilon$-rules
  can be performed in polynomial time,
  the problem is still PSPACE-complete for
  RGs without $\varepsilon$-rules.

  Let $G=(\Gamma,\Sigma,R,I)$ be an input RG for
  \textsc{Regular-Grammar-Non-Univer\-sality}.
  Let $G_{\ALL}=(\{I_{\ALL}\},\Sigma,R_{\ALL},I_{\ALL})$
  be an RG
  %$\Gamma_{\ALL}=\{I_{\ALL}\}$ and
  %where $R_{\ALL}=
  %  \{ I_{\ALL} \to a I_{\ALL} \mid a\in\Sigma \}\cup
  %  \{ I_{\ALL} \to a \mid a\in\Sigma \}$;
  such that $L(G_{\ALL})=\Sigma^+$.
  The reduction to \textsc{Max-Grundy-Number-in-SDG}
  on RLGs constructs an RLG
  $G'=(\Gamma\cup\{I_{\ALL},I_1,\ldots,I_9\},
  \Sigma, R\cup R_{\ALL}\cup R', I_1)$
  where
  $R'$ consists of twelve unit rules shown in
  the following diagram.
  %$R'=\{I_1\to I_2$, $I_2\to I_3\mid I_7$,
  %$I_3\to I_4$, $I_4\to I_5$, $I_5\to I_6\mid I_{\ALL}$,
  %$I_6\to I_{\ALL}$,
  %$I_7\to I_8$, $I_8\to I_9\mid I$, $I_9\to I\}$.
  %
  \begin{center}
  \begin{tikzpicture}
    \node (I1) {$I_1$};
    \node[right=of I1] (I2) {$I_2$};
    \node[above right=0.1 and 1.0 of I2] (I3) {$I_3$};
    \node[right=of I3] (I4) {$I_4$};
    \node[right=of I4] (I5) {$I_5$};
    \node[right=of I5] (I6) {$I_6$};
    \node[right=of I6] (Iall) {$I_{\ALL}$};
    \node[below right=0.1 and 1.0 of I2] (I7) {$I_7$};
    \node[right=of I7] (I8) {$I_8$};
    \node[right=of I8] (I9) {$I_9$};
    \node[right=of I9] (I) {$I$};
    \graph {
      (I1) -> (I2) -> (I3) -> (I4) -> (I5) -> (I6) -> (Iall);
      (I2) -> (I7) -> (I8) -> (I9) -> (I);
    };
    \draw[->] (I5) edge[bend left] (Iall)
      (I8) edge[bend right] (I)
    ;
    \tikzstyle{every node}=[font={\footnotesize\sffamily}, text=red]
    \node[above=0 of I5] {nonzero};
    \node[above=0 of I4] {0};
    \node[above=0 of I3] {1};
    \node[below=0 of I8] {\begin{tabular}{c}nonzero \\ or $\bot$\end{tabular}};
    \node[below=0 of I7] {0 or $\bot$};
    \node[below=0 of I2] {2 or 0};
    \node[below=0 of I1] {0 or 1};
  \end{tikzpicture}
  \end{center}
  The diagram also shows the possible Grundy numbers for
  some nonterminals.
  For example, $I_5$ is labeled with ``nonzero,''
  which means that $\Grd{G',I_5}(w) > 0$
  for every $w\in\Sigma^+$.
  This is implied by the fact that
  either $\Grd{G',I_6}(w)=0$ or $\Grd{G',I_{\ALL}}(w)=0$ holds
  for every $w\in\Sigma^+$,
  because $\Grd{G',I_6}(w)=0$ whenever
  $\Grd{G',I_{\ALL}}(w) > 0$.
  (This construction is based on Lemma~2 of \cite{IKIK26}.)
  The fact that
  $\Grd{G',I_5}(w) > 0$ for every $w\in\Sigma^+$
  implies $\Grd{G',I_4}(w)=0$ and $\Grd{G',I_3}(w)=1$
  for every $w\in\Sigma^+$.
  Similarly,
  $\Grd{G',I_8}(w)>0$ and $\Grd{G',I_7}(w)=0$ for
  every $w\in L(G',I) = L(G)$,
  and $\Grd{G',I_7}(w)=\Grd{G',I_8}(w)=\bot$ for
  every $w\notin L(G)$.
  Since $\{(I_2,w)\}$ has two successors
  $\{(I_3,w)\}$ and $\{(I_7,w)\}$ if $w\in L(G)$ and
  has only one successor $\{(I_3,w)\}$ if $w\notin L(G)$,
  %$\Grd{G',I_2}(w) = 2$ and $\Grd{G',I_1}(w) = 0$
  %if $w\in L(G)$ and
  %$\Grd{G',I_2}(w) = 0$ and $\Grd{G',I_1}(w) = 1$
  %if $w\notin L(G)$.
  \begin{equation*}
    \Grd{G',I_2}(w) = \begin{cases}
      2 & \text{if } w\in L(G) \\
      0 & \text{otherwise},
    \end{cases}
    \ \ \text{and}\ \
    \Grd{G',I_1}(w) = \begin{cases}
      0 & \text{if } w\in L(G) \\
      1 & \text{otherwise}.
    \end{cases}
  \end{equation*}
  Therefore, $\max_{w\in\Sigma^+}\bigl(\Grd{G',I_1}(w)\bigr)=1$
  if and only if there exists
  $w\notin L(G)$.
  %$w\in \Sigma^+\setminus L(G)$.
  \qed
\end{proof}

%%% ----- Examples -----

\section{Examples}
\label{sec:examples}

\begin{example}
\label{example:csub}
  Consider the same RLG as Example~\ref{example:pre-csub}:
  $G=(\{A\},\allowbreak \Sigma,\allowbreak R,A)$ where $\Sigma=\{\t{a},\t{b}\}$ and
  $R=\{\,
  A \to \t{a}A \mid \t{aa}A \mid \t{b}A \mid \t{a}
  \mid \t{aa} \mid \t{b} \,\}$.
  %which equals the grammar $\mathbf{CSub}$
  %described in Examples~4 and 7 in~\cite{IKIK26}.
%
  Let $m_A=3$.
  The construction of $M_{A,i}=(Q,\Sigma,\delta,q_0,F_{A,i})$
  for $i\le m_A$ is
  performed as follows.

  \medskip
  \noindent
  Step~1:\quad
  Let $\Rn=\{\, A\to \t{a}A \mid \t{aa}A \mid \t{b}A \,\}$,
  $\Rt'=\{\, A\to \t{a}\E \mid \t{aa}\E \mid \t{b}\E \,\}$,
  and $\Ru=\emptyset$.

  \smallskip
  \noindent
  Step~2:\quad
  Let $\Gamma'=\{A^0,\ldots,A^3,\E^0\}$ and
  $\GammaN=\{\t{a}A^0,\ldots,\t{a}A^3,\t{a}\E^0\}$.
  %Then, let $Q=\powset{\Gamma'\cup\GammaN}$ and
  Moreover, let $q_0=\{\E^0\}$.

  \smallskip
  \noindent
  Steps~3 and~4:\quad
  For simplicity,
  we restrict the state set $Q$ to the set of states
  reachable from~$q_0$, and
  we define the transition function only on the restricted~$Q$.
  Starting from $q_0$,
  we iteratively apply Steps~3 and~4 to a newly found state~$q$.
  We obtain the following
  $J_{A,q,a}$, $\delta'_A(q,a)$, and $\delta(q,a)$
  for state $q$ reachable from $q_0$ and $a\in\{\t{a},\t{b}\}$.
\bgroup\small
\begin{align*}
  J_{A,         \{\E^0\},\t{a}} &= \{0\},
    & \delta'_A(\{\E^0\},\t{a}) &= \{A^1\},
    & \delta(   \{\E^0\},\t{a}) &= \{A^1,\t{a}\E^0\}. \\
  J_{A,         \{\E^0\},\t{b}} &= \{0\},
    & \delta'_A(\{\E^0\},\t{b}) &= \{A^1\},
    & \delta(   \{\E^0\},\t{b}) &= \{A^1\}. \\
  J_{A,         \{A^1,\t{a}\E^0\},\t{a}} &= \{1,0\},
    & \delta'_A(\{A^1,\t{a}\E^0\},\t{a}) &= \{A^2\},
    & \delta(   \{A^1,\t{a}\E^0\},\t{a}) &= \{A^2,\t{a}A^1\}. \\
  J_{A,         \{A^1,\t{a}\E^0\},\t{b}} &= \{1\},
    & \delta'_A(\{A^1,\t{a}\E^0\},\t{b}) &= \{A^0\},
    & \delta(   \{A^1,\t{a}\E^0\},\t{b}) &= \{A^0\}. \\
  J_{A,         \{A^1\},\t{a}} &= \{1\},
    & \delta'_A(\{A^1\},\t{a}) &= \{A^0\},
    & \delta(   \{A^1\},\t{a}) &= \{A^0,\t{a}A^1\}. \\
  J_{A,         \{A^1\},\t{b}} &= \{1\},
    & \delta'_A(\{A^1\},\t{b}) &= \{A^0\},
    & \delta(   \{A^1\},\t{b}) &= \{A^0\}. \\
  J_{A,         \{A^2,\t{a}A^1\},\t{a}} &= \{2,1\},
    & \delta'_A(\{A^2,\t{a}A^1\},\t{a}) &= \{A^0\},
    & \delta(   \{A^2,\t{a}A^1\},\t{a}) &= \{A^0,\t{a}A^2\}. \\
  J_{A,         \{A^2,\t{a}A^1\},\t{b}} &= \{2\},
    & \delta'_A(\{A^2,\t{a}A^1\},\t{b}) &= \{A^0\},
    & \delta(   \{A^2,\t{a}A^1\},\t{b}) &= \{A^0\}. \\
  J_{A,         \{A^0\},\t{a}} &= \{0\},
    & \delta'_A(\{A^0\},\t{a}) &= \{A^1\},
    & \delta(   \{A^0\},\t{a}) &= \{A^1,\t{a}A^0\}. \\
  J_{A,         \{A^0\},\t{b}} &= \{0\},
    & \delta'_A(\{A^0\},\t{b}) &= \{A^1\},
    & \delta(   \{A^0\},\t{b}) &= \{A^1\}. \\
  J_{A,         \{A^0,\t{a}A^1\},\t{a}} &= \{0,1\},
    & \delta'_A(\{A^0,\t{a}A^1\},\t{a}) &= \{A^2\},
    & \delta(   \{A^0,\t{a}A^1\},\t{a}) &= \{A^2,\t{a}A^0\}. \\
  J_{A,         \{A^0,\t{a}A^1\},\t{b}} &= \{0\},
    & \delta'_A(\{A^0,\t{a}A^1\},\t{b}) &= \{A^1\},
    & \delta(   \{A^0,\t{a}A^1\},\t{b}) &= \{A^1\}. \\
  J_{A,         \{A^0,\t{a}A^2\},\t{a}} &= \{0,2\},
    & \delta'_A(\{A^0,\t{a}A^2\},\t{a}) &= \{A^1\},
    & \delta(   \{A^0,\t{a}A^2\},\t{a}) &= \{A^1,\t{a}A^0\}. \\
  J_{A,         \{A^0,\t{a}A^2\},\t{b}} &= \{0\},
    & \delta'_A(\{A^0,\t{a}A^2\},\t{b}) &= \{A^1\},
    & \delta(   \{A^0,\t{a}A^2\},\t{b}) &= \{A^1\}. \\
  J_{A,         \{A^1,\t{a}A^0\},\t{a}} &= \{1,0\},
    & \delta'_A(\{A^1,\t{a}A^0\},\t{a}) &= \{A^2\},
    & \delta(   \{A^1,\t{a}A^0\},\t{a}) &= \{A^2,\t{a}A^1\}. \\
  J_{A,         \{A^1,\t{a}A^0\},\t{b}} &= \{1\},
    & \delta'_A(\{A^1,\t{a}A^0\},\t{b}) &= \{A^0\},
    & \delta(   \{A^1,\t{a}A^0\},\t{b}) &= \{A^0\}. \\
  J_{A,         \{A^2,\t{a}A^0\},\t{a}} &= \{2,0\},
    & \delta'_A(\{A^2,\t{a}A^0\},\t{a}) &= \{A^1\},
    & \delta(   \{A^2,\t{a}A^0\},\t{a}) &= \{A^1,\t{a}A^2\}. \\
  J_{A,         \{A^2,\t{a}A^0\},\t{b}} &= \{2\},
    & \delta'_A(\{A^2,\t{a}A^0\},\t{b}) &= \{A^0\},
    & \delta(   \{A^2,\t{a}A^0\},\t{b}) &= \{A^0\}. \\
  J_{A,         \{A^1,\t{a}A^2\},\t{a}} &= \{1,2\},
    & \delta'_A(\{A^1,\t{a}A^2\},\t{a}) &= \{A^0\},
    & \delta(   \{A^1,\t{a}A^2\},\t{a}) &= \{A^0,\t{a}A^1\}. \\
  J_{A,         \{A^1,\t{a}A^2\},\t{b}} &= \{1\},
    & \delta'_A(\{A^1,\t{a}A^2\},\t{b}) &= \{A^0\},
    & \delta(   \{A^1,\t{a}A^2\},\t{b}) &= \{A^0\}. \\
\end{align*}
\egroup

  \noindent
  Step~5:\quad
  Define $F_{A,i}$ for $0\le i\le 3$
  as follows.
  \begin{align*}
    F_{A,0} &= \{\{A^0\}, \{A^0,\t{a}A^1\}, \{A^0,\t{a}A^2\}\}, \\
    F_{A,1} &= \{
      \{A^1\}, \{A^1,\t{a}\E^0\}, \{A^1,\t{a}A^0\}, \{A^1,\t{a}A^2\}
      \}, \\
    F_{A,2} &= \{\{A^2,\t{a}A^0\}, \{A^2,\t{a}A^1\}\}, \\
    F_{A,3} &= \emptyset.
  \end{align*}
  %\medskip

  \begin{figure}[t]\centering
  \begin{tikzpicture}[scale=0.9,transform shape]
    \node[initial,elliptic state] (E0) {$\E^0$};
    \node[elliptic state] (A1aE0) [right=of E0] {$A^1,\t{a}\E^0$};
    \node[elliptic state] (A2aA1) [right=of A1aE0] {$A^2,\t{a}A^1$};
    \node[elliptic state] (A0aA2) [right=of A2aA1] {$A^0,\t{a}A^2$};
    \node[elliptic state] (A1) [below=of A1aE0] {$A^1$};
    \node[elliptic state] (A0) [below=of A2aA1] {$A^0$};
    \node[elliptic state] (A1aA0) [below=of A0aA2] {$A^1,\t{a}A^0$};
    \node[elliptic state] (A0aA1) [below=of A1] {$A^0,\t{a}A^1$};
    \node[elliptic state] (A2aA0) [below=of A0] {$A^2,\t{a}A^0$};
    \node[elliptic state] (A1aA2) [below=of A1aA0] {$A^1,\t{a}A^2$};
    \path
    (E0)    edge node {$\t{a}$} (A1aE0)
            edge [bend right] node {$\t{b}$} (A1)
    (A1aE0) edge node {$\t{a}$} (A2aA1)
            edge node [yshift=-3mm,xshift=4mm] {$\t{b}$} (A0)
    (A2aA1) edge node {$\t{a}$} (A0aA2)
            edge node [yshift=-2mm] {$\t{b}$} (A0)
    (A0aA2) edge node {$\t{a}$} (A1aA0)
            edge [out=-150,in=60] node [swap,xshift=24mm,yshift=-2mm] {$\t{b}$} (A1)
    (A1)    edge [bend right=10] node [swap] {$\t{a}$} (A0aA1)
            edge [bend left=5] node {$\t{b}$} (A0)
    (A0)    edge [bend left=5] node {$\t{a}$} (A1aA0)
            edge [bend left=5] node {$\t{b}$} (A1)
    (A1aA0) edge node [swap,yshift=-3mm,xshift=4mm] {$\t{a}$} (A2aA1)
            edge [bend left=5] node {$\t{b}$} (A0)
    (A0aA1) edge node {$\t{a}$} (A2aA0)
            edge [bend right=10] node [swap] {$\t{b}$} (A1)
    (A2aA0) edge node {$\t{a}$} (A1aA2)
            edge node {$\t{b}$} (A0)
    (A1aA2) edge [bend left=25] node {$\t{a}$} (A0aA1)
            edge node {$\t{b}$} (A0)
    ;
  \end{tikzpicture}
  \caption{DFA $(Q,\Sigma,\delta,q_0,\emptyset)$
    in Example~\ref{example:csub}.}
  \label{fig:csub-r}
  \end{figure}

  Figure~\ref{fig:csub-r} shows the constructed DFA
  $(Q,\Sigma,\delta,q_0,\emptyset)$ without final states.
  By simulating the behavior of this DFA
  on the words of length 4 or less,
  we obtain the following table of languages.
  We can confirm that every word $w$ appearing in this table
  satisfies
  ``$w\in L(M_{A,i}) \iff \Grd{G,A}(\Rev{w})=i$.''
  \begin{align*}
    L(M_{A,0})&=\{
      \begin{aligned}[t]
      &\t{ab},\t{ba},\t{bb},\t{aaa},\t{aab},
      \t{abab},\t{abba},\t{abbb},\t{baab},\t{baba},\t{babb},\t{bbab}, \\
      &\t{bbba},\t{bbbb},
      \ldots
    \}, \end{aligned} \\
    L(M_{A,1})&=\{
      \t{a},\t{b},\t{aba},\t{abb},\t{bab},\t{bba},\t{bbb},
      \t{aaaa},\t{aaab},\t{aaba},\t{aabb},\t{baaa},
      \ldots
    \}, \\
    L(M_{A,2})&=\{
      \t{aa},\t{baa},\t{abaa},\t{bbaa},\ldots
    \}, \\
    L(M_{A,3})&=\emptyset.
  \end{align*}

  \begin{figure}[t]\centering
  \begin{tikzpicture}
    \node[initial,state] (E0) {};
    \node[state] (q1) [right=of E0] {};
    \node[state] (q2) [right=of q1] {};
    \node[state,accepting] (q0) [right=of q2] {};
    \path
    (E0) edge node {$\t{a}$} (q1)
         edge [bend left=60] node {$\t{b}$} (q2)
    (q1) edge node {$\t{a}$} (q2)
         edge [bend left=60] node {$\t{b}$} (q0)
    (q2) edge [bend left=10] node {$\t{a},\t{b}$} (q0)
    (q0) edge [bend left=60] node {$\t{a}$} (q1)
         edge [bend left=10]node {$\t{b}$} (q2)
    ;
  \end{tikzpicture}
  \caption{The minimum DFA equivalent to $M_{A,0}$ in Example~\ref{example:csub}.}
  \label{fig:csub-r-min}
  \end{figure}
  %\par\medskip
  \begin{figure}[t]\centering
  \begin{tikzpicture}
    \node[initial,state] (q0) {};
    \node[state,accepting] (q0e) [below=0.8cm of q0] {};
    \node[state] (q1o) [left=of q0e] {};
    \node[state] (q2e) [left=of q1o] {};
    \node[state] (qx)  [left=of q2e] {};
    \node[state] (q0o) [right=of q0e] {};
    \node[state,accepting] (q1e) [right=of q0o] {};
    \node[state,accepting] (q2o) [right=of q1e] {};
    \node[state,accepting] (qa)  [right=of q2o] {};
    \path
    (q0)  edge node [swap,xshift=1mm] {$\t{a}$} (q1o)
          edge node [xshift=-1mm] {$\t{b}$} (q0o)
    (q0e) edge [bend right=10] node [swap] {$\t{a}$} (q1o)
          edge [bend left=10] node {$\t{b}$} (q0o)
    (q1o) edge node [swap] {$\t{a}$} (q2e)
          edge [bend right=10] node [swap,xshift=-1mm] {$\t{b}$} (q0e)
    (q2e) edge [out=-45,in=-135] node [swap] {$\t{a}$} (q0e)
          edge node [swap] {$\t{b}$} (qx)
    (qx)  edge [loop above] node {$\t{a},\t{b}$} (qx)
    (q0o) edge [bend left=10] node {$\t{a}$} (q1e)
          edge [bend left=10] node {$\t{b}$} (q0e)
    (q1e) edge node {$\t{a}$} (q2o)
          edge [bend left=10] node [xshift=1mm] {$\t{b}$} (q0o)
    (q2o) edge [out=-135,in=-45] node {$\t{a}$} (q0o)
          edge node {$\t{b}$} (qa)
    (qa)  edge [loop above] node {$\t{a},\t{b}$} (qa)
    ;
  \end{tikzpicture}
  \caption{A DFA recognizing $\Rev{L(M_{A,0})}$ in Example~\ref{example:csub}.}
  \label{fig:csubdfa}
  \end{figure}

  Figure~\ref{fig:csub-r-min} shows the minimum DFA
  equivalent to $M_{A,0}$, and
  Figure~\ref{fig:csubdfa} shows
  the minimum DFA recognizing $\Rev{L(M_{A,0})}$.
  %constructed from the DFA in Fig.~\ref{fig:csub-r-min}.
  In Example~7 of~\cite{IKIK26}, Inoue et al.\ have shown
  a necessary and sufficient condition on
  a word $w$ to satisfy $\Grd{G,A}(w)=0$ for this grammar~$G$:
  Let $w=\t{a}^{d_1}\t{b}^{e_1}\t{a}^{d_2}\t{b}^{e_2}\ldots
  \t{a}^{d_k}\t{b}^{e_k}$ with
  $d_2,\ldots,d_k,e_1,\ldots,e_{k-1} > 0$ and
  $d_1,e_k\ge 0$, and let
  $\{0,1,2\} \ni {d'_i} \equiv d_i \bmod 3$.
  Then, $\Grd{G,A}(w)=0$ if and only if
  (1) ${d'_i}\ne 2$ for each $1\le i\le k$ and
  $\sum_{i=1}^k {d'_i}+\sum_{i=1}^k e_i$ is even, or
  (2) there exists $1\le m\le k$ such that
  ${d'_1},\ldots,{d'_{m-1}}\ne 2$,
  ${d'_m}=2$, and
  $\sum_{i=1}^{m-1} {d'_i}+\sum_{i=1}^{m-1} e_i$ is odd.
  We can confirm that $\Rev{L(M_{A,0})}$ represented by
  the DFA in Fig.~\ref{fig:csubdfa} coincides with
  this condition.

\end{example}

%%%

\begin{example}
  \label{example:csub2}
  We modify the RLG in Example~\ref{example:csub} as follows:
  $G=(\{A,B\},\Sigma,R,A)$ where $\Sigma=\{\t{a},\t{b}\}$ and
  $R=\{\,
  A \to \t{a}A \mid \t{aa}A \mid B \mid \t{a}
  \mid \t{aa}$, $B \to \t{b}A \mid \t{b} \,\}$.

  Let $m_A=4$ and $m_B=2$.
  The construction of $M_{X,i}=(Q,\Sigma,\delta,q_0,F_{X,i})$
  for $X\in\{A,B\}$ and $i\le m_X$ is
  performed as follows.

  \medskip
  \noindent
  Step~1:\quad
  Let $\Rn=\{\, A\to \t{a}A \mid \t{aa}A$, $B\to \t{b}B \,\}$,
  $\Ru=\{A\to B\}$, and
  $\Rt'=\{\, A\to \t{a}\E \mid \t{aa}\E$, $B\to \t{b}\E \,\}$.
  Define $<_{\Gamma}$ as $B <_{\Gamma} A$.

  \smallskip
  \noindent
  Step~2:\quad
  Define $\Gamma'$ and $\GammaN$ in a similar way
  to Example~\ref{example:csub}.
  Let $q_0=\{\E^0\}$.

  \smallskip
  \noindent
  Steps~3 and~4:\quad
  As in Example~\ref{example:csub},
  we restrict $Q$ to the set of states
  reachable from~$q_0$ and
  define $\delta$ only on the restricted~$Q$.
  %Starting from $q_0$,
  %we iteratively apply Steps~3 and~4 to a newly found state~$q$.
  Since $B <_{\Gamma} A$, we define
  $J_{B,q,a}$ and $\delta'_B(q,a)$ before
  defining $J_{A,q,a}$
  for each $q$ and~$a$.
  We obtain the following
  $J_{X,q,a}$, $\delta'_X(q,a)$, and $\delta(q,a)$
  for $X\in\{A,B\}$, state $q$ reachable from $q_0$, and $a\in\{\t{a},\t{b}\}$.
  Note that
  $J_{B,q,\t{a}}=\delta'_B(q,\t{a})=\emptyset$
  for any $q$
  because the right-hand side of every rule for $B$ starts with~$\t{b}$.
\bgroup\small
\begin{align*}
  J_{A,         \{\E^0\},\t{a}} &= \{0\},
    & \delta'_A(\{\E^0\},\t{a}) &= \{A^1\},
    & \delta(   \{\E^0\},\t{a}) &= \{A^1,\t{a}\E^0\}. \\
  J_{B,         \{\E^0\},\t{b}} &= \{0\},
    & \delta'_B(\{\E^0\},\t{b}) &= \{B^1\}, \\
  J_{A,         \{\E^0\},\t{b}} &= \{1\},
    & \delta'_A(\{\E^0\},\t{b}) &= \{A^0\},
    & \delta(   \{\E^0\},\t{b}) &= \{A^0,B^1\}. \\
  J_{A,         \{A^1,\t{a}\E^0\},\t{a}} &= \{1,0\},
    & \delta'_A(\{A^1,\t{a}\E^0\},\t{a}) &= \{A^2\},
    & \delta(   \{A^1,\t{a}\E^0\},\t{a}) &= \{A^2,\t{a}A^1\}. \\
  J_{B,         \{A^1,\t{a}\E^0\},\t{b}} &= \{1\},
    & \delta'_B(\{A^1,\t{a}\E^0\},\t{b}) &= \{B^0\}, \\
  J_{A,         \{A^1,\t{a}\E^0\},\t{b}} &= \{0\},
    & \delta'_A(\{A^1,\t{a}\E^0\},\t{b}) &= \{A^1\},
    & \delta(   \{A^1,\t{a}\E^0\},\t{b}) &= \{A^1,B^0\}. \\
  J_{A,         \{A^0,B^1\},\t{a}} &= \{0\},
    & \delta'_A(\{A^0,B^1\},\t{a}) &= \{A^1\},
    & \delta(   \{A^0,B^1\},\t{a}) &= \{A^1,\t{a}A^0\}. \\
  J_{B,         \{A^0,B^1\},\t{b}} &= \{0\},
    & \delta'_B(\{A^0,B^1\},\t{b}) &= \{B^1\}, \\
  J_{A,         \{A^0,B^1\},\t{b}} &= \{1\},
    & \delta'_A(\{A^0,B^1\},\t{b}) &= \{A^0\},
    & \delta(   \{A^0,B^1\},\t{b}) &= \{A^0,B^1\}. \\
  J_{A,         \{A^2,\t{a}A^1\},\t{a}} &= \{2,1\},
    & \delta'_A(\{A^2,\t{a}A^1\},\t{a}) &= \{A^0\},
    & \delta(   \{A^2,\t{a}A^1\},\t{a}) &= \{A^0,\t{a}A^2\}. \\
  J_{B,         \{A^2,\t{a}A^1\},\t{b}} &= \{2\},
    & \delta'_B(\{A^2,\t{a}A^1\},\t{b}) &= \{B^0\}, \\
  J_{A,         \{A^2,\t{a}A^1\},\t{b}} &= \{0\},
    & \delta'_A(\{A^2,\t{a}A^1\},\t{b}) &= \{A^1\},
    & \delta(   \{A^2,\t{a}A^1\},\t{b}) &= \{A^1,B^0\}. \\
  J_{A,         \{A^1,B^0\},\t{a}} &= \{1\},
    & \delta'_A(\{A^1,B^0\},\t{a}) &= \{A^0\},
    & \delta(   \{A^1,B^0\},\t{a}) &= \{A^0,\t{a}A^1\}. \\
  J_{B,         \{A^1,B^0\},\t{b}} &= \{1\},
    & \delta'_B(\{A^1,B^0\},\t{b}) &= \{B^0\}, \\
  J_{A,         \{A^1,B^0\},\t{b}} &= \{0\},
    & \delta'_A(\{A^1,B^0\},\t{b}) &= \{A^1\},
    & \delta(   \{A^1,B^0\},\t{b}) &= \{A^1,B^0\}. \\
  J_{A,         \{A^1,\t{a}A^0\},\t{a}} &= \{1,0\},
    & \delta'_A(\{A^1,\t{a}A^0\},\t{a}) &= \{A^2\},
    & \delta(   \{A^1,\t{a}A^0\},\t{a}) &= \{A^2,\t{a}A^1\}. \\
  J_{B,         \{A^1,\t{a}A^0\},\t{b}} &= \{1\},
    & \delta'_B(\{A^1,\t{a}A^0\},\t{b}) &= \{B^0\}, \\
  J_{A,         \{A^1,\t{a}A^0\},\t{b}} &= \{0\},
    & \delta'_A(\{A^1,\t{a}A^0\},\t{b}) &= \{A^1\},
    & \delta(   \{A^1,\t{a}A^0\},\t{b}) &= \{A^1,B^0\}. \\
  J_{A,         \{A^0,\t{a}A^2\},\t{a}} &= \{0,2\},
    & \delta'_A(\{A^0,\t{a}A^2\},\t{a}) &= \{A^1\},
    & \delta(   \{A^0,\t{a}A^2\},\t{a}) &= \{A^1,\t{a}A^0\}. \\
  J_{B,         \{A^0,\t{a}A^2\},\t{b}} &= \{0\},
    & \delta'_B(\{A^0,\t{a}A^2\},\t{b}) &= \{B^1\}, \\
  J_{A,         \{A^0,\t{a}A^2\},\t{b}} &= \{1\},
    & \delta'_A(\{A^0,\t{a}A^2\},\t{b}) &= \{A^0\},
    & \delta(   \{A^0,\t{a}A^2\},\t{b}) &= \{A^0,B^1\}. \\
  J_{A,         \{A^0,\t{a}A^1\},\t{a}} &= \{0,1\},
    & \delta'_A(\{A^0,\t{a}A^1\},\t{a}) &= \{A^2\},
    & \delta(   \{A^0,\t{a}A^1\},\t{a}) &= \{A^2,\t{a}A^0\}. \\
  J_{B,         \{A^0,\t{a}A^1\},\t{b}} &= \{0\},
    & \delta'_B(\{A^0,\t{a}A^1\},\t{b}) &= \{B^1\}, \\
  J_{A,         \{A^0,\t{a}A^1\},\t{b}} &= \{1\},
    & \delta'_A(\{A^0,\t{a}A^1\},\t{b}) &= \{A^0\},
    & \delta(   \{A^0,\t{a}A^1\},\t{b}) &= \{A^0,B^1\}. \\
  J_{A,         \{A^2,\t{a}A^0\},\t{a}} &= \{2,0\},
    & \delta'_A(\{A^2,\t{a}A^0\},\t{a}) &= \{A^1\},
    & \delta(   \{A^2,\t{a}A^0\},\t{a}) &= \{A^1,\t{a}A^2\}. \\
  J_{B,         \{A^2,\t{a}A^0\},\t{b}} &= \{2\},
    & \delta'_B(\{A^2,\t{a}A^0\},\t{b}) &= \{B^0\}, \\
  J_{A,         \{A^2,\t{a}A^0\},\t{b}} &= \{0\},
    & \delta'_A(\{A^2,\t{a}A^0\},\t{b}) &= \{A^1\},
    & \delta(   \{A^2,\t{a}A^0\},\t{b}) &= \{A^1,B^0\}. \\
  J_{A,         \{A^1,\t{a}A^2\},\t{a}} &= \{1,2\},
    & \delta'_A(\{A^1,\t{a}A^2\},\t{a}) &= \{A^0\},
    & \delta(   \{A^1,\t{a}A^2\},\t{a}) &= \{A^0,\t{a}A^1\}. \\
  J_{B,         \{A^1,\t{a}A^2\},\t{b}} &= \{1\},
    & \delta'_B(\{A^1,\t{a}A^2\},\t{b}) &= \{B^0\}, \\
  J_{A,         \{A^1,\t{a}A^2\},\t{b}} &= \{0\},
    & \delta'_A(\{A^1,\t{a}A^2\},\t{b}) &= \{A^1\},
    & \delta(   \{A^1,\t{a}A^2\},\t{b}) &= \{A^1,B^0\}. \\
\end{align*}
\egroup

  \noindent
  Step~5:\quad
  Define $F_{A,i}$ and $F_{B,j}$
  for $0\le i\le 4$ and $0\le j\le 2$
  in a similar way to Example~\ref{example:csub}.
  Note that $F_{A,3}=F_{A,4}=F_{B,2}=\emptyset$.
  \medskip

  \begin{figure}[t]\centering
  \begin{tikzpicture}[scale=0.9,transform shape]
    \node[initial,elliptic state] (E0) {$\E^0$};
    \node[elliptic state] (A1aE0) [right=of E0] {$A^1,\t{a}\E^0$};
    \node[elliptic state] (A2aA1) [right=of A1aE0] {$A^2,\t{a}A^1$};
    \node[elliptic state] (A0aA2) [right=of A2aA1] {$A^0,\t{a}A^2$};
    \node[elliptic state] (A0B1) [below=of A1aE0] {$A^0,B^1$};
    \node[elliptic state] (A1B0) [below=of A2aA1] {$A^1,B^0$};
    \node[elliptic state] (A1aA0) [below=of A0aA2] {$A^1,\t{a}A^0$};
    \node[elliptic state] (A0aA1) [below=of A0B1] {$A^0,\t{a}A^1$};
    \node[elliptic state] (A2aA0) [below=of A1B0] {$A^2,\t{a}A^0$};
    \node[elliptic state] (A1aA2) [below=of A1aA0] {$A^1,\t{a}A^2$};
    \path
    (E0)    edge node {$\t{a}$} (A1aE0)
            edge [bend right] node {$\t{b}$} (A0B1)
    (A1aE0) edge node {$\t{a}$} (A2aA1)
            edge node [yshift=-3mm,xshift=4mm] {$\t{b}$} (A1B0)
    (A2aA1) edge node {$\t{a}$} (A0aA2)
            edge node [yshift=-2mm] {$\t{b}$} (A1B0)
    (A0aA2) edge node {$\t{a}$} (A1aA0)
            edge [out=-150,in=45] node [swap,xshift=24mm,yshift=-2mm] {$\t{b}$} (A0B1)
    (A0B1)  edge [bend right=20] node [xshift=-18mm,yshift=-1mm] {$\t{a}$} (A1aA0)
            edge [loop above] node {$\t{b}$} (A0B1)
    (A1B0)  edge node {$\t{a}$} (A0aA1)
            edge [loop left,out=-170,in=170,looseness=7] node {$\t{b}$} (A1B0)
    (A1aA0) edge node [swap,yshift=-3mm,xshift=4mm] {$\t{a}$} (A2aA1)
            edge node [swap] {$\t{b}$} (A1B0)
    (A0aA1) edge node {$\t{a}$} (A2aA0)
            edge node {$\t{b}$} (A0B1)
    (A2aA0) edge node {$\t{a}$} (A1aA2)
            edge node [yshift=-2mm] {$\t{b}$} (A1B0)
    (A1aA2) edge [bend left=25] node {$\t{a}$} (A0aA1)
            edge node {$\t{b}$} (A1B0)
    ;
  \end{tikzpicture}
  \caption{DFA $(Q,\Sigma,\delta,q_0,\emptyset)$
    in Example~\ref{example:csub2}.}
  \label{fig:csub2-r}
  \end{figure}

  Figure~\ref{fig:csub2-r} shows the constructed DFA
  $(Q,\Sigma,\delta,q_0,\emptyset)$ without final states.
  By simulating the behavior of this DFA
  on the words of length 4 or less,
  we obtain the following table of languages.
  %We can confirm that every word $w$ appearing in this table
  %satisfies
  %``$w\in L(M_{A,i}) \iff \Grd{G,A}(\Rev{w})=i$.''
  %
  \begin{align*}
    L(M_{A,0})&=\{
      \t{b},\t{bb},\t{aaa},\t{aba},\t{bbb},
      \t{aaab},\t{aaba},\t{abab},\t{abba},\t{baaa},\t{baba},\t{bbbb},
      \ldots
    \}, \\
    L(M_{A,1})&=\{
      \begin{aligned}[t]
      &\t{a},\t{ab},\t{ba}, \t{aab},\t{abb},\t{bab},\t{bba},
      \t{aaaa},\t{aabb},\t{abbb},\t{baab},\t{babb},\t{bbab},
      \\ &\t{bbba},
      \ldots
    \}, \end{aligned} \\
    L(M_{A,2})&=\{
      \t{aa},\t{baa},\t{abaa},\t{bbaa},\ldots
    \}, \\
    L(M_{B,0})&=\{
      \begin{aligned}[t]
      & \t{ab},\t{aab},\t{abb},\t{bab},
        \t{aabb},\t{abbb},\t{baab},\t{babb},\t{bbab},
      \ldots
    \}, \end{aligned} \\
    L(M_{B,1})&=\{
      \t{b},\t{bb}, \t{bbb},
      \t{aaab},\t{abab},\t{bbbb},
      \ldots
    \}, \\
    L(M_{A,3})&=L(M_{A,4})=L(M_{B,2})=\emptyset.
  \end{align*}

  \begin{figure} \centering
  \begin{tikzpicture}[->, thick, scale=0.9, transform shape]
  \tikzstyle{every node}=[font=\small]
  \foreach \w [count=\i from 0] in {aa, ba, ab, bb}
  {
    \node (Aa\w) at (0, -2.1 * \i) {$(A,\t{a\w})$};
    \node [below=0.6 of Aa\w.center, anchor=center] (Ab\w) {$(A,\t{b\w})$};
    \node [below=0.8 of Ab\w.center, anchor=center] (Bb\w) {$(B,\t{b\w})$};
    \draw (Ab\w) -- (Bb\w);
  }
  \foreach \w [count=\i from 0] in {ba,aa,ab,bb}
  {
    \node[below right=0.7 and 2.8 of Aa\w.center, anchor=center] (A\w) {$(A,\t{\w})$};
    \draw (Aa\w) -- (A\w);
    \draw (Bb\w) -- (A\w);
  }
  \foreach \w [count=\i from 0] in {ba,bb}
  {
    \node[above=0.8 of A\w.center, anchor=center] (B\w) {$(B,\t{\w})$};
    \draw (A\w) -- (B\w);
  }
  \foreach \w [count=\i from 0] in {a,b}
  {
    \node[below right=0.7 and 2.8 of Aa\w.center, anchor=center] (A\w) {$(A,\t{\w})$};
    \draw (Aa\w) -- (A\w);
    \draw (Bb\w) -- (A\w);
  }
  \node[above=0.8 of Ab.center, anchor=center] (Bb) {$(B,\t{b})$};
  \draw (Ab) -- (Bb);
  \node[below right=1.8 and 2.8 of Aa.center, anchor=center] (F) {$\varnothing$};
  \draw (Aa) to [bend left=10] (F);
  \draw (Bb) -- (F);
  \draw (Aaaa) to [bend left=20] (Aa);
  \draw (Aaab) to [bend left=20] (Ab);
  \draw (Aaa) to [bend right=10] (F);
  \tikzstyle{every node}=[font=\footnotesize, text=red]
  \foreach \w / \g in {Aaaa/0, Abaa/1, Bbaa/0, Aaba/0, Abba/1, Bbba/0,
                       Aaab/2, Abab/1, Bbab/0, Aabb/1, Abbb/0, Bbbb/1}
  {
    \node[below=0.2 of \w.north west, anchor=center] {$\g$};
  }
  \foreach \w / \g in {aa/2, ba/1, ab/1, bb/0, a/1, b/0}
  {
    \node[below left=0.1 and 0.2 of A\w.south, anchor=center] {$\g$};
  }
  \foreach \w / \g in {ba/0, bb/1, b/1}
  {
    \node[above left=0.1 and 0.2 of B\w.north, anchor=center] {$\g$};
  }
  \node[above=0.1 of F.north, anchor=center] {$0$};
  \end{tikzpicture}
  \caption{A partial game graph of the successful derivation game
    in Example~\ref{example:csub2}.}
  \label{fig:graph}
  \end{figure}%
  Figure~\ref{fig:graph} shows a partial game graph
  of the successful derivation game on~$G$,
  %In the graph, each singleton position $\{(X,w)\}$
  %is written as $(X,w)$ for simplicity.
  which contains every position $(X,w)$ such that $|w|\le 3$
  and also shows the Grundy number of each position.
  Note that $\{(B,\t{a}x)\}$ for $x\in\Sigma^*$ is not
  a position of this game because $L(G,B)=\t{b}\Sigma^*$.
  We can see that the Grundy numbers in Fig.~\ref{fig:graph}
  satisfy ``$\Rev{w}\in L(M_{X,i}) \iff \Grd{G,X}(w)=i$.''
    \begin{figure}\centering
  \begin{tikzpicture}
    \node[initial,state] (E0) {};
    \node[state,accepting] (q0) [right=of E0] {};
    \node[state] (q1) [right=of q0] {};
    \node[state] (q2) [right=of q1] {};
    \path
    (E0) edge [bend left=60] node {$\t{a}$} (q1)
         edge node {$\t{b}$} (q0)
    (q0) edge node {$\t{a}$} (q1)
         edge [loop below] node {$\t{b}$} (q0)
    (q1) edge node {$\t{a},\t{b}$} (q2)
    (q2) edge [bend left=45] node {$\t{a}$} (q0)
         edge [loop above] node {$\t{b}$} (q2)
    ;
  \end{tikzpicture}
  \caption{The minimum DFA equivalent to $M_{A,0}$ in Example~\ref{example:csub2}.}
  \label{fig:csub2-r-min}
  \end{figure}
  \begin{figure}\centering
  \begin{tikzpicture}
    \node[initial,state] (q0) {};
    \node[state,accepting] (q0e) [below right=0.8cm and 0.5cm of q0] {};
    \node[state] (q1o) [left=of q0e] {};
    \node[state] (q2e) [left=of q1o] {};
    \node[state] (qx)  [left=of q2e] {};
    \node[state] (q0o) [right=of q0e] {};
    \node[state,accepting] (q1e) [right=of q0o] {};
    \node[state,accepting] (q2o) [right=of q1e] {};
    \node[state,accepting] (qa)  [right=of q2o] {};
    \path
    (q0)  edge node [swap,xshift=1mm] {$\t{a}$} (q1o)
          edge node [xshift=-1mm] {$\t{b}$} (q0e)
    (q0e) edge node [swap] {$\t{a}$} (q1o)
          edge [loop above] node {$\t{b}$} (q0e)
    (q1o) edge node [swap] {$\t{a}$} (q2e)
          edge [out=-45,in=-135] node [swap] {$\t{b}$} (q0o)
    (q2e) edge [out=-45,in=-135] node [swap] {$\t{a}$} (q0e)
          edge node [swap] {$\t{b}$} (qx)
    (qx)  edge [loop above] node {$\t{a},\t{b}$} (qx)
    (q0o) edge node {$\t{a}$} (q1e)
          edge [loop above] node {$\t{b}$} (q0o)
    (q1e) edge node {$\t{a}$} (q2o)
          edge [out=-135,in=-45] node {$\t{b}$} (q0e)
    (q2o) edge [out=-135,in=-45] node {$\t{a}$} (q0o)
          edge node {$\t{b}$} (qa)
    (qa)  edge [loop above] node {$\t{a},\t{b}$} (qa)
    ;
  \end{tikzpicture}
  \caption{A DFA recognizing $\Rev{L(M_{A,0})}$ in Example~\ref{example:csub2}.}
  \label{fig:csub2dfa}
  \end{figure}%

  Figure~\ref{fig:csub2-r-min} shows the minimum DFA
  equivalent to $M_{A,0}$, and
  Figure~\ref{fig:csub2dfa} shows
  the minimum DFA recognizing $\Rev{L(M_{A,0})}$.
\end{example}

%%% ----- Examples -----

\section{Conclusion}
\label{sec:conclusion}

In this paper, we studied
the successful derivation game (SDG)
on a right-linear grammar (RLG)
and showed that
we can construct a DFA for computing the Grundy number of
a given position.
This means that for this game,
the set of positions with a given Grundy number~$c$
is regular.
As a corollary, we showed that
the least upper bound of the Grundy numbers
in the SDG on a given RLG is decidable,
which is known to be undecidable in general
for the SDG on a linear CFG\@.
%As another corollary,
%both
%$L_{\cal{P}}=\{w\in\Sigma^+\mid \Grd{G,I}(w)=0\}$
%(the set of initial positions where the first player loses)
%and
%$L_{\cal{N}}=\{w\in\Sigma^+\mid \Grd{G,I}(w)\ne 0\}$
%(the set of initial positions where the first player wins)
%are regular.
%Future work includes investigating
%the complexity of
We also showed that
computing the least upper bound of the Grundy numbers
in the SDG on an RLG is PSPACE-complete.

%%%%%

\bibliographystyle{splncs04}
\bibliography{ncrsp}

\end{document}